\documentclass[letterpaper,twocolumn,10pt]{article}
\usepackage{usenix-2020-09}

\usepackage{booktabs}

\usepackage{paralist}

\usepackage{standalone}

\usepackage[utf8]{inputenc}

\usepackage{amsmath} \usepackage{amsthm} \usepackage{amssymb}
\usepackage{listings}
\usepackage{soul}

\usepackage{tikz}
\usepackage{pgf-umlsd}
\usetikzlibrary{positioning, fit, decorations.pathreplacing, calc,
  shapes.geometric, arrows.meta, decorations.pathmorphing, patterns, shapes.misc, shapes.arrows}
\usepackage{xspace}
\usepackage{tabu}
\usepackage{enumitem}
\usepackage{todonotes}
\usepackage[titletoc]{appendix}

\newcommand{\T}{\ensuremath{\mathcal{A}ggr}\xspace}
\newcommand{\N}{\ensuremath{\mathcal{N\!T}\mathcal{\!P}}\xspace}
\newcommand{\ADM}{\ensuremath{\mathcal{A}dm}\xspace}
\newcommand{\agent}{\ensuremath{\mathcal{A}gt}\xspace}
\newcommand{\dn}{\ensuremath{D}\xspace}
\newcommand{\da}{\ensuremath{D'}\xspace}

\newcommand{\alg}[1]{\mathsf{#1}}
\newcommand{\adv}{\mathcal{A}dv}

\newcommand{\PR}{\mathsf{Pr}}

\newcommand{\riar}{\xrightarrow{\$}}
\newcommand{\ZZ}{\mathbb{Z}}
\newcommand{\WW}{\mathbb{W}}
\newcommand{\GG}{\mathbb{G}}

\newcommand{\DD}{\mathcal{D}}
\newcommand{\HH}{\mathcal{H}}

\newcommand{\RR}{\mathcal{R}}
\newcommand{\OO}{\mathcal{O}}

\newcommand{\advantage}[2]{\alg{adv}^{#1}_{\adv\;\!,#2}}

\newcommand*\Heq{\ensuremath{\overset{\kern2pt H}{=}}}

\newcommand{\eat}[1]{}

\newtheorem{definition}{Definition}
\newtheorem{theorem}{Theorem}
\newtheorem{lemma}{Lemma}

\newenvironment{changemargin}[2]{%
\list{}{\rightmargin#2\leftmargin#1
\parsep=0pt\topsep=0pt\partopsep=0pt}
\item[]}
{\endlist}

\tikzset{cross/.style={cross out, draw=black, minimum size=2*(#1-\pgflinewidth), inner sep=0pt, outer sep=0pt},
cross/.default={1pt}}

\usepackage{biblatex}
\usepackage[available,functional,reproduced]{usenixbadges}


\newcommand{\mypara}[1]{\vspace{.5em}\noindent\textbf{{#1}~}}

\pagenumbering{gobble}

\begin{document}
\title{Oblivious Digital Tokens}

\author{
    {\rm Mihael Liskij}\\
    ETH Zurich
    \and
    {\rm Xuhua Ding}\\
    Singapore Management University
    \and
    {\rm Gene Tsudik}\\
    UC Irvine
    \and
    {\rm David Basin}\\
    ETH Zurich
}

\maketitle

\begin{abstract}
\looseness=-1
A computing device typically identifies itself by exhibiting unique measurable behavior or by proving its knowledge of a secret.
In both cases, the identifying device must reveal information to a verifier.
Considerable research has focused on protecting identifying entities (provers) and reducing the amount of leaked data.
However, little has been done to conceal the fact that the verification occurred.

\looseness=-1
We show how this problem naturally arises in the context of {\em digital emblems}, which were recently proposed by the International Committee of the Red Cross to protect digital resources during cyber-conflicts. To address this new and important open problem, we define a new primitive, called an Oblivious Digital Token (ODT) that can be verified obliviously. Verifiers can use this procedure to check whether a device has an ODT without revealing to any other parties (including the device itself) that this check occurred. We demonstrate the feasibility of ODTs and present a concrete construction that
provably meets the ODT security requirements, even if the prover device's software is fully compromised. We also implement a prototype of the proposed construction and evaluate its performance, thereby confirming its practicality.

\end{abstract}

\pagestyle{plain}
\section{Introduction}
\label{sec:intro}
\looseness=-1
We begin with the problem that motivates this research.
Given the rapid growth of cyber armed forces in numerous states and the increasing likelihood of
cyber-warfare, the International Committee of the Red Cross (ICRC) introduced a digital
red cross emblem. According to ICRC, this emblem should \cite{icrc}:
\begin{quote}
``(...) convey a simple message: in times of armed conflict, those who wear them, or
facilities and objects marked with them, must be protected against harm.''
\end{quote}

\looseness=-1
Similarly to prominently displayed red cross emblems physically painted on (or attached to)
protected buildings, facilities, or vehicles in war zones, the digital emblem is intended
to flag digital assets, e.g., computing devices, that are off-limits to cyber-attacks under
international humanitarian law. A possible use-case \cite{icrc} described by the ICRC is as follows:
    \textit{During an armed conflict with State A, State B's malware spreads automatically to affect computers
    managing A's military supplies. State B's reconnaissance indicates that some systems are marked by
    digital emblems and belong to a hospital. State B therefore amends its attack program to avoid harming such systems.}
Among the various technical and operational issues listed by the ICRC, we believe that any digital emblem scheme must satisfy
three important properties for the parties involved in a cyber-conflict (e.g, States A and B) to
respect it\footnote{The fact that some parties may not respect international humanitarian law neither
lessens nor obviates the ICRC's need for digital emblems.}:
\begin{compactitem}
\item\textbf{Verification Obliviousness}~ Emblem verification must not put the aggressor (State B)
at risk of being exposed.
\item\textbf{Binding Integrity}~ Emblems should not be misused by the administrator (State A) ``to falsely
claim protection, e.g., by routing operations through facilities showing the emblem.'' \cite{icrc}
\item\textbf{Security Preservation}~ The security of protected assets should not be weakened by emblems,
i.e., State B should be unable to leverage information returned by emblems
to facilitate or ease its attacks on State A, even if it decides to do this contrary to the law.
\end{compactitem}

\looseness=-1
Since the original call to action on digital emblems by the ICRC, there is increasing interest in this topic,
including within the Internet Engineering Task Force (IETF) \cite{linker-digital-emblem-01}.
Thus far, two realizations of a digital emblem \cite{icrc,adem} have been proposed and more may be on the way.
Both proposals assume that the administrator (State A) is trusted.
One technique embeds information in visible system artifacts \cite{icrc} (e.g., file names, IP addresses, or
domain names) where the names or associated metadata indicate that the entity is protected.
The other technique is the Authenticated Digital Emblem (ADEM) scheme \cite{adem}, which builds a
PKI-like infrastructure for verifying digital emblems inserted into DNS, TLS, and other communication.

\looseness=-1
In practice, we believe that the security of either technique is uncertain. A dishonest administrator can easily
break verification obliviousness by monitoring accesses to a file emblem. It can also compromise binding
integrity by cloning a file emblem to unauthorized devices or by using a device protected by a digital
emblem as a network proxy for its own (not protected) devices' communications. These administrator attacks
are possible due to the inherent tension between binding integrity and (either or both) verification obliviousness
and security preservation. We elaborate on this in the next section.

\looseness=-1
In this paper, we tackle these limitations and propose a novel digital emblem scheme that satisfies three
properties listed above without trusting the system software and the network infrastructure on the administrator's side.
The proposed digital emblem is a dynamically generated proof that a device is protected under international
humanitarian law, in contrast with prior schemes where the emblem is static. To highlight this distinction
and the different trust model, we call our emblems \emph{Oblivious Digital Tokens} (ODTs).

\looseness=-1
At a high level, we devise a way for a device to insert an ODT into every outgoing TLS handshake.
ODT generation takes place inside a pre-installed Trusted Execution Environment (TEE) instantiated
using Intel SGX or Arm TrustZone (currently the most popular commodity TEEs), combining system security
and cryptographic techniques. Disguised as a regular TLS server, an aggressor expects TLS connections
from compromised devices and verifies the ODTs (if any) received over these connections. ODT
verification is made oblivious by ensuring that messages sent by the aggressor are
indistinguishable from a standard TLS message flow.

This work makes the following contributions.
\begin{compactitem}
\item We propose the notion of ODT, which formulates the ICRC's digital emblem concept by considering both sides of a
cyber-conflict as adversaries with respect to ODT security.
\item We present a concrete ODT scheme constructed using two components: a TEE-based system to generate evidence
for binding integrity and a Privacy-Preserving Equality Test (PPET) protocol for evidence verification.
\item We prototype the ODT scheme based on OpenSSL and SGX and evaluate its practicability and performance.
Our results show that ODT generation costs about $144$ms per TLS handshake. For the security evaluation, we use
the Tamarin protocol verifier to prove binding integrity and conduct a statistical and system analysis for
security preservation and verification obliviousness, respectively.
\end{compactitem}
Generally speaking, the ODT scheme can be viewed as a special type of a remote attestation technique.
However, to the best of our knowledge, it is the first-of-its-kind scheme that considers both
the prover and the verifier to be malicious and is thus designed from the stance of a neutral third-party.

\noindent\textsc{Organization.}
Section~\ref{sec:prob} formulates the ODT problem as an extension of digital emblems.
Section~\ref{sec:const} overviews our ODT construction, components of which are described in
Sections~\ref{sec:wg} and~\ref{sec:eq_prot}. The
complete scheme is presented in Section~\ref{sec:integration}. Our {security analysis} is given
in Section~\ref{sec:sec-analysis}, followed by discussion and related work in Section~\ref{sec:discussion}.

\section{Problem Formulation}
\label{sec:prob}
We now review the ICRC digital emblem notion and formulate it as an ODT
under an adversary model that is more realistic and stronger than those of prior schemes \cite{icrc,adem}.

\subsection{ICRC Emblems}
Figure~\ref{fig:red-cross-use-case} depicts the digital emblem setting considered by the ICRC.
There are three stakeholders:

\looseness=-1
\noindent (1) The neutral third party (\N) represents an organization, such as the ICRC or
Doctors Without Borders, that is protected under international humanitarian law. \N issues and installs
emblems on its devices: desktops, servers, laptops, and smartphones. These devices are said to be
\emph{\N-protected} to reflect their special legal status. Analogous to physical emblems, digital emblems
just signal this status and {\it do not prevent} attacks.
Device \dn in Figure~\ref{fig:red-cross-use-case} illustrates one such example.

\noindent (2) Once deployed, i.e., placed in the field, $D$ is managed by an entity on one side of
the cyber-conflict, called the administrator (\ADM). \ADM also manages other non-protected devices without emblems,
e.g., \da in Figure~\ref{fig:red-cross-use-case}.

\noindent (3) The offensive party in the conflict is the aggressor (\T), who mounts cyber-attacks
on \ADM's devices. According to international humanitarian law, \T is expected to
check for the presence of an \N-issued emblem on the targeted or compromised device, before
causing any damage. Hereafter, we use the terms \T and \emph{Verifier} interchangeably.

\begin{figure}[ht]
  \centering
  \begin{tikzpicture}[scale=1, every node/.style={transform shape}]
    \setlength{\abovedisplayskip}{0pt}
    \setlength{\belowdisplayskip}{0pt}

    \node[draw] (operator) {Admin};

    \node[draw, text width=1.8cm, right=1.2cm of operator] (device) {Device \dn};
    \draw[dashed, -{Latex}] ($ (operator.south) + (-0.5cm, 0.0) $) -- ($ (operator.south) + (-0.5cm, -0.5cm) $) -- node[above] {Administrates} ($ (operator.south) + (2cm, -0.5cm) $) --  (device);

    \node[draw, right=2.2cm of device] (threat actor) {Aggressor};
    \draw[{Latex}-{Latex}] ($ (threat actor.west) + (0, 0.1) $) -- node[above] {Connect} ($ (device.east) + (0, 0.1) $);
    \draw[-{Latex}] ($ (device.east) - (0, 0.1) $) -- node[below] (proof) {Proof} ($ (threat actor.west) - (0, 0.1) $);

    \node[draw, above=0.7cm of operator.north, xshift=0.97cm] (NTP) {Neutral third party};
    \draw[-{Latex}] ($ (NTP.east) + (0, 0.1) $) --  ($ (device.north) + (0, 1.1cm) $) -- node[right] (issue) {Issue} (device);

    \node[draw, below=0.7cm of device.south, xshift=-0.1cm] (unprotected device) {Device \da};
    \draw[dashed, -{Latex}] ($ (operator.south) + (2cm, -0.5)$ ) -- (unprotected device);

    \draw[{Latex}-{Latex}] ($ (threat actor.south) + (-0.1, 0) $) -- ($ (unprotected device.east) + (3.23cm, 0.1)$) -- node[above] {Connect} ($ (unprotected device.east) + (0, 0.1) $);
    \draw[-{Latex}] ($ (unprotected device.east) + (0, -0.1) $) -- node[below] {No proof} ($ (unprotected device.east) + (3.45cm, -0.1)$) -- ($ (threat actor.south) + (0.1, 0) $);

    \node[above=of device.north, xshift=2cm] (legend) {};
    \draw[-{Latex[length=1mm,width=1mm]}] ($ (legend) $) -- ++(0.5cm, 0);
    \node[right=0.3cm of legend] {Message};
    \draw[dashed, -{Latex[length=1mm,width=1mm]}] ($ (legend) + (0, 0.4cm) $) -- ++(0.5cm, 0);
    \node[right=0.3cm of legend, yshift=0.4cm] (relation legend) {Relation};
    \node[right=0.3cm of legend, yshift=-0.4cm] (emblem legend) {Emblem};

    \node[cross=4pt, rotate=45, red, right=0.01cm of issue, yshift=-0.1cm] (n1) {};
    \draw (n1.center) circle (5pt);

    \node[cross=4pt, rotate=45, red, right=0.1cm of proof, yshift=-0.1cm] (n2) {};
    \draw (n2.center) circle (5pt);

    \node[cross=4pt, rotate=45, red, left=0.12cm of device.east, yshift=0.12cm] (n3) {};
    \draw (n3.center) circle (5pt);

    \node[cross=4pt, rotate=45, red, left=0.03cm of emblem legend, yshift=0.1cm] (n4) {};
    \draw (n4.center) circle (5pt);

    \node[left=0cm of n4, yshift=4, xshift=6] (help1) {};
    \node[right=0cm of relation legend.north east, yshift=-10, xshift=-12] (help2) {};
    \node[draw, fit=(help1) (help2)] {};

  \end{tikzpicture}
  \caption{Overview of ICRC digital emblem use-case. \dn is \N-protected while \da is not.}
  \label{fig:red-cross-use-case}
\end{figure}

\subsection{Incentives and Models}
\label{sec:adv}
To simplify the description, we only consider two devices, \dn and \da, as in Figure~\ref{fig:red-cross-use-case},
throughout the rest of the paper. The proposed technique is easily generalized to settings with additional devices.

\ADM and \T are hostile to each other and both attempt to exploit the digital emblem scheme to their maximum
advantage, in different ways.
\begin{compactitem}
\item  \ADM has incentives to break binding integrity. It attempts to deceive
\T into believing that \da is \N-protected.
\item \ADM also has incentives to break verification obliviousness. It attempts to detect the existence of emblem
verification activities in \dn and \da, e.g., to determine whether a file read or a network connection is being used
for emblem verification.
\item \T has incentives to verify emblems in order to comply with international humanitarian law. It also aims
to break security preservation by abusing emblem verification to gather information about the target devices'
software and configuration, which is otherwise unavailable.
\end{compactitem}

\looseness=-1
\mypara{Attack Capabilities}
We assume that \T can stealthily hack into both \dn and \da, and inject code using techniques independent
of the digital emblem. \ADM fully controls the network and its infrastructure. This is a more powerful model
than the classic Dolev-Yao adversary \cite{dolevSecurityPublicKey1983} that only controls communications.
\ADM also controls all system software, i.e., operating systems and hypervisors (if any) in
both \dn and \da, and has access to all system resources, e.g., memory and I/O ports.
\ADM monitors all relevant events, such as system calls, network, and disk I/O, as well as
scans all memory for specific patterns.

\mypara{Trust Model}
We assume that \dn is equipped with a hardware-based publicly identifiable TEE, e.g., Intel SGX or Arm TrustZone.
We assume that no adversary against binding integrity can break the security assurances of the TEE,
including data confidentiality and control flow integrity.
However, we do \emph{not} assume that executions inside the TEE reveal no side-channel information to the adversary
that attacks verification obliviousness, since such assurance is outside the TEE's own design scope.
Thus, the TEE is not part of the Trusted Computing Base (TCB) of verification obliviousness, which is realized
using cryptography.

\looseness=-1
We do not consider \T's false claims of emblem absence on \N-protected devices.
Likewise, emblem-unrelated attacks launched from \N-protected devices, e.g., a DDoS attack against \T,
are out of scope of this work.

\noindent\textsc{Caveat.} It is well-known that secret keys maintained within TEEs are subject to
various side-channel attacks \cite{ragab2021crosstalk,sgxpectre2019,oleksenko2018varys}. The compromise of such keys
directly breaks binding integrity since the attacker can then clone the emblem. However, it does not affect
verification obliviousness in our model and is orthogonal to security preservation.

\subsection{Oblivious Digital Token (ODT)}
\label{sec:def}
The ODT definition embodies the ICRC's notion of digital emblems in the setting of
Figure~\ref{fig:red-cross-use-case} with security in the adversarial model described above.
\begin{definition} \looseness=-1
Given any \N-protected device \dn and any unprotected device \da, an ODT is an unforgeable
and verifiable digital emblem created by an emblem generation functionality, such that, if verified successfully,
the ODT proves to \T that \dn is \N-protected and satisfies three properties: (a) it is infeasible to produce an emblem
proving that \da is \N-protected (\textbf{binding integrity}); (b) no entity (including \ADM and \N) can detect the
existence of \T's verification (\textbf{verification obliviousness}); and (c) no information about \dn is leaked
to \T (\textbf{security preservation}).
$\Box$
\end{definition}

\looseness=-1
Formal definitions of the properties (a)--(c) are deferred to Section~\ref{sec:sec-analysis}, which also
contains our security analysis. Note that an invalid ODT does not mean that the device is not \N-protected,\
because this could be the result of  tampering with ODT generation or transmission.
Also note that security preservation requires that an adversary cannot deduce non-public information about
\dn, for example, library versions, running processes or secret keys, by engaging in ODT verification.
Security preservation is not concerned with the anonymity of the device, since protected devices
belong to public institutions.

\subsection{To Invade or Not To Invade}
\label{sec:main-challenge}
\T may wish to verify the ODT of a device just via network communication because, by being non-invasive,
it can easily remain oblivious to \ADM. However, our threat model reflects that \ADM is capable of system
manipulation and we argue that this prevents a non-intrusive \T from  verifying emblems with binding integrity.
To see why this is the case, we describe attacks on ADEM and some hypothetical schemes before delving
deeper into the problem.

\subsubsection{Attacks on ADEM and Its Variants}
Consider a DNS-based ADEM as a warm-up example. \ADM can simply assign \da the IP address ostensibly for \dn.
As a result, \T would mistakenly believe that \da is \N-protected. Similar attacks work against other
types of ADEM emblems.

\looseness=-1
One might try to strengthen ADEM with a hardware-based TEE. For example, \N could install a TEE on \dn to
host a device-specific signing key, with an associated certificate.
For every TLS connection, the TEE  then returns to its peer a dynamically generated emblem,
i.e., an attestation using the current TLS handshake secret and the device state.
This scheme would prevent \ADM from cloning the emblem from \dn to \da, since the TEE's key is bound to
\dn's hardware. However, \ADM can route \da's network traffic through \dn, similar to a NAT setting.
This attack, which is explicitly noted by ICRC, breaks binding integrity since the remote party would
mistakenly treat its communication peer \da as \N-protected.

\looseness=-1
Similarly, isolating the entire TLS layer or even the hardware interface using the TEE
only makes \ADM's routing attack more complex, rather than effectively stopping the attack.

\subsubsection{The Core Issue} Under our adversary model, \ADM can manipulate the system and network
infrastructure so that \emph{no reliable device name\-space} is accessible to \T.
Keys, IP addresses, host names, etc. are all logical identifiers. By themselves, they neither
enable \T to securely pinpoint a physical device nor determine the emblem's protection scope.

\looseness=-1
This issue is somewhat similar to cuckoo attacks \cite{trust,presence_att}, a well-known problem
in identifying the root of trust (RoT) in a remote device.
Current solutions to cuckoo attacks, such as those using distance-bounding \cite{dhar2020proximitee} and
ambient properties \cite{presence_att}, fail in the ODT context for two reasons:
First, they require \T to have physical control over the relevant device, e.g., by being in close physical proximity.
Second, they require the prover's explicit and interactive cooperation with the verifier. Both are
impractical in a cyber-conflict. They also conflict with the verification obliviousness requirement.

\subsubsection{Necessity of Compromise-then-Verify}
\looseness=-1
Recall the use case in ICRC's report \cite{icrc} that suggests a compromise-then-verify paradigm in which \T
invades a target device and injects its \emph{agent}, an executable program for reconnaissance purpose, to check the existence of a file emblem or a process emblem among an ``active process list".
This paradigm is consistent with the so-called ``kill-chain model'' used in the Advanced Persistent Threat (APT)
attacks as reported by Hutchins et al.~\cite{hutchins2011intelligence}, as well as ransomware, and supply chain attacks.
These sophisticated attacks, with several US government organizations (including the DoD) on their victim list,
establish a persistent and stealthy command-and-control channel between a compromised device and a remote server
on the Internet.

\looseness=-1
We thus propose for \T to build the namespace for its targets by leveraging its capability of invading them.
The intuition is that the reconnaissance agent helps \T to identify the target. Namely, \T checks whether the
program's execution and ODT generation take place on the same device.

\mypara{Stealthy Compromise v.s. ODT Properties}
Clearly, \T's stealthy compromise is a prerequisite for its ODT verification to achieve binding integrity.
If the agent is detected, \ADM can run it on \da and enlarge its execution scope to enclose
\dn, thus breaking binding integrity.
The relation between stealthy compromise and verification obliviousness is more intricate and subtle.
It is neither in our scope nor is it \N's responsibility to \emph{ensure} that the program remains undetected.
Nonetheless, it is in our scope to ensure that \emph{ODT verification
does not induce additional risks of exposing \T's reconnaissance agent}.

\section{ODT Scheme: Overview}
\label{sec:const}
The proposed scheme follows the compromise-then-verify paradigm. We assume that \T can stealthily inject its
reconnaissance agent into target devices.
In the following, we first describe how
our scheme expects the agent to behave and then describe the scheme's workflow, followed by the technical approach.

\subsection{Agent Description}
\looseness=-1
The agent either runs as an independent process or resides within a victim process. The ODT scheme only expects the
agent to use its hosting device's default library to establish TLS connections with a remote server, presumably
under \T's control. We chose TLS over other protocols since it is very widely used by applications in laptops and mobile phones, thus providing a better coverage than less popular protocols.
We neither impose requirements on the bytes sent or received in the connections nor do we require the agent
to perform any specific operation or have specific memory contents, such as a key or a secret pattern.

\looseness=-1
Since the proposed scheme only assumes that the agent makes TLS connections in the same way as other
applications and malware do that, we show in Section~\ref{sec:verif-obliv} that the scheme does not
make the agent more susceptible to detection than TLS-using malware.

\subsection{High-Level Workflow}
\looseness=-1
Figure~\ref{fig:system} illustrates a system consisting of \da and \dn managed by \ADM. Processes in both devices
make TLS connections to various servers. \N's TEE (denoted \emph{O-TEE}, for \emph{oblivious TEE}) is installed on \dn.
O-TEE holds a globally unique public key certified by \N and inserts itself into \emph{all} TLS handshakes
initiated by local processes in \dn.  \T's agent stealthily runs on \dn and \da. As O-TEE and \T operate without
coordination, we sketch their operations separately.

\looseness=-1
Treating each incoming TLS server handshake as a challenge,
O-TEE produces the corresponding \emph{co-residence witness}, which serves as evidence that O-TEE is (or is not)
on the same platform as the client process. After the handshake, O-TEE returns to the TLS server the ODT which consists
of the witness and a signature. We stress that O-TEE behaves the same way in every TLS connection and cannot
determine if its peer is \T's TLS server.

\looseness=-1
Upon receiving a TLS client handshake, \T's TLS server (i.e., the second server in Figure~\ref{fig:system}) sends an ODT
verification challenge disguised as the server handshake, such that the resulting flow is indistinguishable from
those sent by regular servers. After the handshake, it extracts the ODT (if any) from received messages.
It then verifies the ODT offline by checking the signature and whether the witness matches the expected state of its agent.

\begin{figure}[!ht]
    \centering
    \includegraphics[width=\columnwidth]{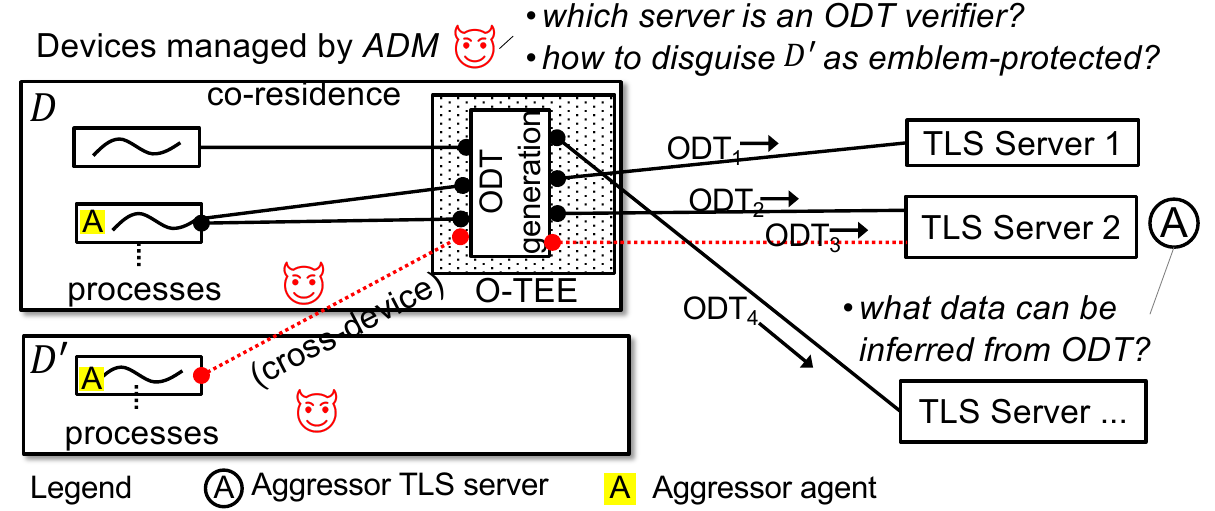}
    \caption{A system view of the ODT scheme. O-TEE mediates four TLS handshake sessions
    and produces distinct ODTs. Red lines indicate \ADM's attack on binding integrity.}
    \label{fig:system}
\end{figure}

\looseness=-1
\subsection{Our Approach}
\label{sec:comp}
Our security rationale is as follows. We design a \emph{Witness Generation System} to attain binding integrity,
and a Privacy-Preserving Equality Test protocol (PPET) for witness verification. The latter prevents an
ODT verifier from inferring information from the witness, thus contributing to security preservation.
Moreover, the witness generation system and the PPET protocol are integrated with TLS messages,
such that neither O-TEE nor \ADM can detect whether an ODT verification is taking place, which ensures
verification obliviousness. We now outline the three components of the ODT scheme whose details are given
in subsequent sections using the notations in Table~\ref{tab:notation}.

\mypara{1. Witness Generation System:} A \emph{witness} is a hash digest
that securely captures both the memory state of the TLS client thread and the thread's co-residence
(or lack thereof) with O-TEE. Essentially, O-TEE runs this component to perform a memory measurement
that attests to binding integrity without trusting the operating system. (See Section~\ref{sec:wg}.)

\looseness=-1
\mypara{2. Privacy-Preserving Equality Test (PPET) Protocol:} The second component is an interactive cryptographic protocol.
It has three message flows between Prover (i.e., O-TEE) holding a secret $w$ and Verifier (i.e., \T) that
guesses Prover's secret as $w'$. At the end of the protocol, Verifier determines whether $w=w'$ is true, without being able to derive any
other information about $w$. The distinguishing feature of this PPET protocol, compared with similar schemes
\cite{lipmaaVerifiableHomomorphicOblivious2003,PET86}, is its \emph{TLS-compatibility}, rather than stronger
security or better performance. This feature means that: (i) protocol messages are compatible with the TLS standard;
and (ii) Prover can successfully run the protocol without differentiating whether TLS handshake messages come from
a normal TLS server or Verifier. This compatibility paves the way for the third component that makes \T
indistinguishable from a normal TLS server, which yields verification obliviousness. (See Section~\ref{sec:eq_prot}.)

\looseness=-1
\mypara{3. TLS Integration} The third component integrates the Prover part of the PPET protocol with the
TLS client handshake protocol run by O-TEE, and integrates the Verifier part of the PPET protocol with the TLS server
handshake protocol run by \T. This integration allows O-TEE to perform PPET
as Prover in \emph{every} TLS handshake, oblivious to whether its peer is an actual Verifier.
Hence \T hides its PPET execution under the guise of a TLS server. (See Section~\ref{sec:integration}.)

\begin{table}
    \centering
    \caption{Notation used throughout the paper.}
    \begin{tabu}{c|p{18em}}
     \tabucline[1pt]{-}
   {\bf Notation}  & {\bf Description} \\ \hline
    $pk_D,sk_D$ & public/private key pair for O-TEE in $D$ \\ \hline
    $\sigma_D$ & signature generated using $sk_D$ \\ \hline
    $\mathbb{B}^{n}$ & set of all $n$-bit binary strings. \\ \hline
    $\Omega$ & predefined set of address ranges, $\Omega\subset \mathbb{B}^{48}$ \\ \hline
    $\mathsf{HS}$ & TLS handshake secret i.e., the ephemeral Diffie-Hellman key \\ \hline
    $\iota^{-1}(), \iota()$ & Elligator~\cite{bernsteinElligatorEllipticcurvePoints2013} encoding and decoding functions. \\
     \tabucline[1pt]{-}
    \end{tabu}
    \label{tab:notation}
\end{table}

\section{Witness Generation}
\label{sec:wg}
Recall that \ADM is the adversary (working against binding integrity) that controls all system software
in \dn and \da. Hence, the main problem for witness generation is to detect
whether the data acquired by O-TEE is \emph{not} staged by \ADM.

\subsection{O-TEE Initialization}
When \dn boots up, its hardware launches O-TEE provisioned by \N.
O-TEE generates an asymmetric key-pair $(pk_D, sk_D)$, the public key component of which ($pk_D$) is certified by the
underlying hardware. \N defines a set of $48$-bit virtual address\footnote{Both x86-64 and ARM64 architectures
use such addresses.} segments to be used as the domain of memory measurement, denoted by
$\Omega\subset \mathbb{B}^{48}$.

\subsection{Memory Measurement Scheme}
\label{sec:mms}
\looseness=-1
The \emph{co-residence witness} is the outcome of the O-TEE-based memory measurement scheme. We define this
scheme below and describe how a valid co-residence witness can only be produced for a process running
on the same device. The measurement is executed by O-TEE after deriving the TLS
handshake secret $k$. Hence, we define the measurement scheme with $k$ as one of the inputs.

\looseness=-1
Suppose that process $P$ in \dn is to be measured. Based on $k$, O-TEE selects one or more virtual memory locations
of $P$ using the function $\mathsf{Select}$, and produces a witness $w$ by running the procedure $\mathsf{Measure}$.
\begin{itemize}[leftmargin=*]
    \item $\mathsf{Select}$ is a function that takes $k$ as input and returns a vector of $m$
    challenge addresses $(c_1,\cdots,c_{m})$ where: $c_i=\mathsf{M}(\mathcal{H}(i||k))$ (for $1<i\leq m$).
    $\mathcal{H}$ is a cryptographic hash function, and $\mathsf{M}$ is a random function
    that uniformly maps a hash digest into $\Omega$.
    \item $\mathsf{Measure}$ is a procedure that reads $m$ memory words $(w_1,\cdots,w_{m})$ from $P$'s
    virtual memory at challenge addresses $(c_1,\cdots,c_{m})$. It also sets a bit-flag $B_{CO}$
    to zero to indicate its co-residence with $P$. It returns a 256-bit $w$ as the \emph{witness}:
    $w=\mathcal{H}(B_{CO}\|w_1\|w_2\cdots\|w_m)$.
    The methods for determining $B_{CO}$ depend on the O-TEE architecture. Two concrete ones are described in Section~\ref{sec:tee}.
 \end{itemize}
The measurement must be \emph{secret}, \emph{atomic} and {\em performed at native speed}. {\em Secret} means that the
memory address to be read is unknown to \ADM before the read operation, which prevents \ADM from gaining any
advantage by preparing the targeted data ahead of time. {\em Atomic} means that, if the memory read is interrupted, the
interruption is faithfully factored into the measurement result. {\em Performed at native speed} means that it is not
slower than the kernel.

\looseness=-1
The rationale for using a secret, atomic and native-speed memory measurement scheme is as follows.
Consider a process $\hat{P}$ running on \da. \ADM wants to deceive O-TEE about its co-residence with
$\hat{P}$. Under our adversary model, \ADM cannot tamper with O-TEE's execution. It also does not know any
challenge addresses before the measurement starts. Hence, it either (i) replicates $\hat{P}$'s entire virtual memory
to \dn, or (ii) tampers with O-TEE's measurement to learn the challenge addresses before making a targeted copy.

\looseness=-1
The former (i) is a brute-force attack that incurs a prohibitively high cost since the size of an application's
virtual memory ranges from a few to hundreds of megabytes. It requires copying the memory words
as well as building the same address mapping, which involves page table modifications.
The cost is even higher considering that \ADM does not know which process on \da is \T's agent.
It must clone every process from \da to \dn in order to pass ODT verification
with a 100\% probability. This is equivalent to running them directly on \dn.

\looseness=-1
For attack (ii) to succeed, \ADM must: (1) win the race with O-TEE (i.e., complete the forgery before
O-TEE finishes the read), and (2) hide any trace of tampering. O-TEE always starts the measurement before
\ADM can attack, and O-TEE reads the memory as fast as \ADM. On one hand, without interrupting O-TEE,
\ADM cannot win because it must first observe and then copy. On the other hand, an interrupt can be detected
by O-TEE. Moreover, since all memory locations are independently chosen $c_i=\mathsf{M}(\mathcal{H}(i||k))$, similar to shuffled measurements \cite{GRT18} used in device attestation,
exposure of the locations measured thus far reveals nothing about subsequent ones.

\looseness=-1
In summary, if the O-TEE-based memory measurement scheme is secret, atomic and performed at native speed,
it yields a valid co-residence witness with respect to the measured process $P$ and O-TEE.
The concrete schemes described in Section~\ref{sec:tee} satisfies the three requirements above.
O-TEE uses the secret key $k$ to determine the memory location(s) to measure.
It reads the target memory at the native speed and in an atomic way.
The hardware updates the O-TEE state if an interrupt or exception occurs when executing the memory read instruction.

\subsection{Concrete Schemes}
\label{sec:tee}
\looseness=-1
On laptops, desktops, and servers, we propose to instantiate O-TEE using the well-known Intel SGX\footnote{While
SGX is deprecated on future ``client platform" processors, it continues to be available on Xeon processors for
servers and cloud platforms.}. On mobile and IoT devices, we instantiate O-TEE using Arm TrustZone on Cortex-M or
Cortex-A processors. Confidential computing technologies, such as Intel TDX \cite{tdx}, Arm CCA, and AMD SEV,
are not ideal for O-TEEs because it is difficult (if not impossible) for software running in those TEEs to
atomically access virtual memory of external processes.

\subsubsection{SGX-based O-TEE}
\label{sec:enclave}
Intel SGX isolates user space data and code demarcated by a virtual address range with an \emph{enclave}
comprising a set of Enclave Page Cache (EPC) pages. SGX provides a confidential computing environment:
no software outside an enclave (e.g., kernel or BIOS) can access an enclave's EPC pages.
Also, an SGX enclave is publicly verifiable.

\looseness=-1
O-TEE can take the form of an enclave mapped to all processes' virtual address spaces as part of the system library.
Intel's Attestation Service (IAS) provides a hardware-based remote attestation facility that allows a remote
verifier to authenticate an enclave's genuineness based on the
hash digest of memory pages loaded during enclave creation.
O-TEE encloses the hash of its public key $pk_D$ into its local attestation to the Quoting
Enclave, an enclave provisioned by Intel to facilitate remote attestation. The latter generates
and signs the enclave quote for O-TEE using the platform's attestation key. The signed quote
is released for public verification.
Hence, \T can determine that there \emph{exists} a legitimate O-TEE instance that owns $pk_D$.

\mypara{Atomic Memory Access} \looseness=-1
SGX does not provide a secure clock for the enclave code to measure time.
When executing enclave code, the CPU accesses non-EPC pages the same way as when running normal code,
i.e., via the MMU's address translation and the memory controller. A memory load
instruction is either executed unobstructed or it encounters an exception. No software can
intervene in its operation without triggering an exception.
Any exception during enclave execution causes an immediate enclave exit.
Before the CPU is trapped to the kernel, an \texttt{EXITINFO} object located in the enclave State
Save Area (SSA) page is updated by hardware to reflect the exception cause. Since the SSA page is
inside the enclave, no system software can read from, or write to, it. Our design uses
this SGX feature to detect exceptions during memory measurement.

\mypara{Measurement and Witness Computation} O-TEE enclave is pre-configured to enable the
hardware to log page faults and general protection exceptions. (Other types of exceptions are logged
by default.) It clears \texttt{EXITINFO} in the SSA page just before the measurements. It then sets
$B_{CO}$ to 0 after the measurement if \texttt{EXITINFO} indicates an exception and sets $B_{CO}$ to 1 otherwise.

\subsubsection{TrustZone-based O-TEE}
\label{sec:tz} \looseness=-1
Arm TrustZone partitions the platform into: {\it Normal World} that hosts the Rich OS plus regular
applications and {\it Secure World} that hosts the Trusted OS and secure applications.
O-TEE is realized as a kernel module of the Trusted OS. Unlike Intel SGX, there is no centralized
trust hierarchy established on a processor and software in Secure World. Nonetheless, the Trusted OS can
carry a credential certified by \N or the manufacturer. The Trusted OS on \dn can further certify
$pk_D$ generated by O-TEE. Thus, the certificate of $pk_D$ is publicly verifiable.

\mypara{Atomic Memory Access}
As part of the Trusted OS, O-TEE runs at Exception Level 1 (EL1) of Secure World. Its code and data are
located in the kernel virtual address range and are thus translated via the \texttt{TTBR1\_EL1} register.
When execution of a user process $P$ is trapped to Secure World for requesting TLS connections,
O-TEE clones the \texttt{TTBR0\_EL1} register serving $P$. This way, O-TEE memory loading instructions
can directly read $P$'s virtual memory.

\looseness=-1
All exceptions raised in Secure World are delivered to the Trusted OS by the hardware. To detect an
exception during measurement, O-TEE hooks the exception handler of the Trusted OS before the measurement starts.
It sets $B_{CO}$ to 0 if its handler is called; and to 1 otherwise.

\section{Privacy-Preserving Equality Test (PPET)}
\label{sec:eq_prot}
\looseness=-1
Recall that PPET is a building block for the ODT scheme. It
allows \T to check whether the witness $w$ generated by O-TEE matches the value $w'$ expected by \T.
We now describe cryptographic aspects of the protocol, showing how the privacy of $w$
is protected against \T when $w\neq w'$ and how it is structurally compatible with TLS.

\looseness=-1
While prior private equality test (PET) \cite{lipmaaVerifiableHomomorphicOblivious2003,PET86}
and private set intersection (PSI) \cite{PSI,PSI_tsudik,PSI99} protocols provide the privacy
properties we need, they are \emph{not} TLS-compatible. For instance, some protocols require the
verifier to send two messages \cite{lipmaaVerifiableHomomorphicOblivious2003} or a zero-knowledge
proof showing that its message is well-formed \cite{PSI_tsudik}. Since large messages and complex
cryptographic data structures do not fit into TLS messages, \T cannot send them covertly.
This leads us to construct a TLS-compatible PPET protocol.

\subsection{The Protocol}
Let $p$ be a large prime number and $\GG$ be a group of order $p$ where the DDH assumption \cite{ddh} holds.
Let $g$ be a generator in $\GG$. The PPET protocol $\mathit{PPET}$
consists of four steps run by Prover and Verifier as in Figure~\ref{fig:wv}.

\begin{figure}[ht]
    \centering
    \begin{tabular}{|c|}
    \hline
    \begin{minipage}{.95\columnwidth}
        \small
      \strut{\tt \ul{PPET protocol between Prover holding $w$ and\\Verifier holding $w'$}}
       \begin{enumerate}[leftmargin=*]
        \item (by Prover) Send a random group element $u \in_R \GG$.
        \item (by Verifier) Do the following:
        \begin{itemize}
            \item Pick a random number $s \in_r \ZZ_p$ and compute $v=g^su^{w'}$ as the commitment to its $w'$.
            \item Send $v$.
        \end{itemize}
        \item (by Prover) If $v$ is not in $\GG$, send two random group elements $y,z\in_R \GG$; otherwise do the following:
        \begin{itemize}
            \item Pick a random number $t \in_r \ZZ_p$ and compute $y=g^t$ and $z=v^tu^{-wt}$.
            \item Send $(y,z)$.
        \end{itemize}
        \item (by Verifier) Assert $w=w'$ iff $z=y^s$.
      \end{enumerate}
    \end{minipage} \\
    \hline
  \end{tabular}
  \caption{PPET Protocol $\mathit{PPET}$.}
  \label{fig:wv}
\end{figure}

\looseness=-1
\noindent If and only if $w'=w$, $z=g^{st}u^{(w'-w)t}= g^{st} = y^s\,.$
Hence, if both parties execute the protocol faithfully, the protocol returns the correct outcome with
respect to $w$ and $w'$.
The rationale behind $\mathit{PPET}$ is that Verifier commits to its witness $w'$ as part of sending $v$.
Prover's response is computed from $v$ such that a mismatching $w'$ is not canceled out of the $u^t$ term,
which masks information about $w$. As a result, Verifier must run the protocol with Prover to verify its
guessed value for $w$. If Verifier intends to learn $w$, it must perform \emph{online} guessing attacks.
By keeping $w$ secret, we minimize information leakage about the memory locations used to derive it.

\subsection{Privacy Preservation}
If, after running $\mathit{PPET}$, Verifier learns that $w'=w$, information is leaked.
However, if $w'\neq w$, Verifier learns nothing about $w$ except that it differs from $w'$.
Thus, a malicious Verifier can eliminate at most one value after each protocol
run and cannot perform an offline guessing attack. We denote the resulting security property
as \emph{privacy preservation}.

\begin{definition}[Privacy Preservation]
\label{def:witn-indist}
  Let $\GG$ be a group of order $p$ and $g$ a group generator.
  Let $\WW$ be the domain of possible witnesses.
  For all witness verification protocols $\mathit{PPET}$ and all adversaries $\adv$,
  we define the advantage function:
  \begin{equation*}
    \begin{split}
      \advantage{\mathsf{PP}}{\mathit{PPET}} = \; & \Big| \frac{1}{2} - \PR \big[ \GG \riar u; \ZZ_p \riar t; \WW \riar w; \\
       & \adv(u) \riar v \in \GG; z_0 = v^t u^{-wt}; \GG \riar z_1; \\
    & \{0, 1\} \riar b; \adv(g^t, z_b) \riar b': b' == b \big] \Big|.
    \end{split}
  \end{equation*}

A $\mathit{PPET}$ satisfies privacy preservation if, for all efficient algorithms $\adv$,
$\advantage{\mathsf{PP}}{\mathit{PPET}}$ is negligible when $\adv$ incorrectly guesses the witness.
\end{definition}

\begin{theorem}
\label{thm:privacy-preservation}
   $\mathit{PPET}$ satisfies privacy preservation. \footnote{For the proof, see the appendix.}
\end{theorem}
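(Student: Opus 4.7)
My plan is to reduce the privacy preservation of $\mathit{PPET}$ to the hardness of DDH in $\GG$. The key observation is that, rewriting $z_0 = v^t u^{-wt} = v^t \cdot (u^t)^{-w}$, the factor $u^t$ is DDH-pseudorandom given $(u, g^t)$, so $(u^t)^{-w}$ acts as a pseudorandom mask that makes $z_0$ indistinguishable from a uniform element of $\GG$, independently of what $v$ the adversary chose.

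I would structure the proof as a sequence of three hybrid games. In Game~0 (the $b=0$ branch of the definition), $z_0 = v^t u^{-wt}$. In Game~1, I replace $u^t$ by a freshly sampled uniform $X \in \GG$, giving $z_0 = v^t \cdot X^{-w}$. In Game~2 (the $b=1$ branch), $z_0$ is a uniformly random element of $\GG$. Since $\GG$ has prime order and $X$ is uniform, $X^{-w}$ is uniform whenever $w \neq 0$, so $v^t X^{-w}$ is uniform, making Games~1 and~2 identically distributed up to the negligible event $w = 0$. For Game~0 vs.\ Game~1, I would build a DDH distinguisher that, on challenge $(g, A, B, Z)$ with $A = g^a$ and $B = g^b$, sets $u := A$ and $g^t := B$, runs $\adv(u)$ to obtain $v$, samples $w$ from $\WW$, and returns $(g^t,\, v^t \cdot Z^{-w})$ to $\adv$; this simulates Game~0 when $Z = g^{ab}$ and Game~1 when $Z$ is uniform in $\GG$.

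The main obstacle is that this reduction must compute $v^t$ without knowing $t = b$, for an adversarially chosen $v \in \GG$. My plan is to close this gap via the knowledge-of-exponent assumption: from $\adv$ one extracts a representation $v = g^\delta u^\gamma$, from which the reduction computes $v^t = B^\delta \cdot Z^\gamma$, preserving the simulation via a purely algebraic substitution. The resulting bound takes the form $\advantage{\mathsf{PP}}{\mathit{PPET}} \le \mathsf{Adv}^{\mathsf{DDH}}_{\GG} + 1/|\WW|$, which is negligible whenever DDH is hard in $\GG$ and $\WW$ is superpolynomially large. An alternative that avoids the knowledge assumption would be to restrict attention to well-formed $v = g^s u^{w'}$ as specified by the honest protocol, or to work in the generic group model, where the algebraic manipulation above goes through unconditionally.
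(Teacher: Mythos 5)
Your core strategy is the same as the paper's: a hybrid argument that uses DDH to replace the mask $u^t$ by a uniform group element, and then uses the primality of $|\GG|$ to conclude that $z_0$ becomes uniform. The paper's Appendix~A does exactly this, with one organizational difference: it first rewrites the adversary's $v$ as $u^x$, excludes the event $x=w$ by appealing to the definition's conditioning on an incorrect guess (its hybrid $\HH_2$ samples $w$ from $\WW\setminus\{x\}$), sets $k=x-w\neq 0$, and only then applies DDH to turn $u^{kt}$ into $g^{kc}$; you instead keep $v$ intact, absorb the correct-guess event into the DDH step, and pay a $1/|\WW|$ statistical term for the event $w=0$. More importantly, the obstacle you flag --- that the reduction must produce $v^t$ for an adversarially chosen $v$ without knowing $t$ --- is genuine, and the paper's own reduction does not escape it: its algorithm $\RR$ is required to output $\gamma^{k}=\gamma^{x-w}$, but $\gamma^{x}=v^{t}$ and $\RR$ never learns $x$, so that step is only well-defined if one reasons algebraically about exponents the reduction does not possess. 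Your knowledge-of-exponent (or algebraic/generic-group) patch is precisely what makes this implicit step rigorous, at the cost of an assumption beyond DDH; your alternative of restricting to well-formed $v=g^su^{w'}$ (which matches the honest Verifier) achieves the same under DDH alone. One accounting point: with the extracted representation $v=g^\delta u^\gamma$, the random branch of your reduction yields $B^\delta Z^{\gamma-w}$, which fails to be uniform exactly when $\gamma=w$; this event also has probability at most $1/|\WW|$ and should be added to your bound.
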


\subsection{Structure Compatibility with TLS}
The message flows of the $\mathit{PPET}$ protocol and a TLS handshake have a similar composition structure. (The distribution indistinguishability between corresponding data objects is addressed in the next section.)
Figure~\ref{fig:standard-tls-handshake} illustrates
the client-server interaction in a TLS v1.3 handshake~\cite{TLS13}. The first two messages perform an ephemeral
Diffie-Hellman key exchange. After receiving $\mathsf{ClientHello}$, the server computes the handshake
secret $\mathsf{HS}=X^{y_s}$ before sending $\mathsf{ServerHello}$.
\begin{figure}[ht]
  \centering
    \scalebox{0.85}{
  \begin{tikzpicture}[every node/.style={transform shape}]
    \pgfmathsetmacro{\myrow}{-0.41};
    \pgfmathsetmacro{\myspace}{0.1};

    \node[draw, inner sep=2px] (enclave) {TLS client};
    \draw (enclave) -- ++(0cm, -1.24cm) -- +(0.5cm, 0) -- +(-0.5cm, 0);

    \node[draw, right=5.5cm of enclave, inner sep=2px] (server) {TLS Server};
    \draw (server) -- ++(0cm, -1.24cm) -- +(0.5cm, 0) -- +(-0.5cm, 0);

    \begin{scope}
        \draw[->] ($ (enclave) + (0, \myrow * 1 + \myspace) $) -- node[above, above, inner sep=0] (first message) {$\mathsf{ClientHello}(n_0,X=g_{\text{DH}}^{x_s})$} ($ (server) + (0, \myrow * 1 + \myspace) $);

        \draw[->] ($ (server) + (0, \myrow * 2 + \myspace) $) -- node[above, inner sep=0] (second message) {$\mathsf{ServerHello}(n_1,Y=g_{\text{DH}}^{y_s}), \mathsf{Cert}, \mathsf{Finished}$} ($ (enclave) + (0, \myrow * 2 + \myspace) $);

        \draw[->] ($ (enclave) + (0, \myrow * 3 + \myspace) $) -- node[above, inner sep=0] {$\mathsf{Finished},  \mathsf{Application Data}$} ($ (server) + (0, \myrow * 3 + \myspace) $);

    \end{scope}

  \end{tikzpicture}
  }
    \caption{Message flows in the TLS v1.3 handshake.}
    \label{fig:standard-tls-handshake}
\end{figure}

\mypara{Verifier Flow vs. TLS Server Flow}
\looseness=-1
$\mathit{PPET}$ requires Verifier to send one random group element $v$, while in TLS a server sends a nonce.
The congruence between the two flows allows us to make them indistinguishable using additional encoding
techniques (see Section~\ref{sec:integration}), which are needed for oblivious verification. Note that
it is insecure to use TLS application data to send $v$ since it is \emph{not} a universal behavior
for all TLS servers.

\looseness=-1
\mypara{Prover Flows vs. TLS Client Flows} The first flow in $\mathit{PPET}$ from Prover is congruent to
its counterpart in TLS $\mathsf{ClientHello}$. Both include random numbers. However, TLS client's \texttt{FINISHED}
message does not involve any random numbers. Hence, when $\mathit{PPET}$ is integrated with TLS, $y$ and $z$ are sent
in a TLS heartbeat message \cite{TLS13}.


\section{Complete ODT Scheme Over TLS}
\label{sec:integration}
\looseness=-1
We now present the ODT scheme that integrates: (1) witness generation and Prover steps of $\mathit{PPET}$ into
TLS client functionality inside O-TEE on \N-protected devices, and (2) Verifier steps of $\mathit{PPET}$ into \T
acting as a TLS server. This integration ensures that the \T's message is indistinguishable from a $\mathsf{ServerHello}$
message.  Hence, O-TEE runs consistently with standard TLS servers and \T's server.

\subsection{TLS for ODT Communication}
\label{sec:covert}
While the $\mathit{PPET}$ and TLS flows are structurally similar, we must fine-tune
the data representation in order to achieve indistinguishability.

\subsubsection{TLS Placeholders}
For ODT messages from O-TEE, we use the field $\mathsf{ClientHello.random}$ (i.e., client nonce $n_0$ in
Figure~\ref{fig:standard-tls-handshake}) to send $u$ and the TLS heartbeat request for sending
$y$ and $z$. We chose to send the message $v$ from \T in the field $\mathsf{ServerHello.random}$,
i.e., server nonce $n_1$ in Figure~\ref{fig:standard-tls-handshake}. This chosen field is mandatory
in the TLS handshake phase. Hence, its appearance in the flow does not violate obliviousness.

\looseness=-1
Both nonces are 256 bits long and, more importantly, both are randomly sampled values, so no
subsequent TLS computations restrict their domain. By default, a TLS heartbeat request carries a payload
of up to $2,048$ bytes and requires a heartbeat response echoing it back to the sender.
The table below summarizes the TLS objects
involved in the scheme.

\begin{table}[ht]
    \centering
    \begin{tabu}{llll}
     \tabucline[1pt]{-}
   {\bf TLS obj}  & {\bf Type in TLS} & {\bf ODT obj} & {\bf Sent by} \\ \hline
   nonce $n_0$ & random 256 bits & $u$  & O-TEE  \\
   heartbeat & 2048 bytes & $y,z$ & O-TEE \\
   nonce $n_1$ & random 256 bits & $v$ & Server /\T \\
           \tabucline[1pt]{-}
    \end{tabu}
\end{table}

\looseness=-1
\subsubsection{\texorpdfstring{$\mathit{PPET}$}{PPET} Data in TLS Objects}
To load $\mathit{PPET}$ data into selected TLS objects, we must address both the size issue
(to fit the limited binary size) and the binary representation issue (to match the distribution of the TLS objects).
Since both nonces are 256 bits, we instantiate group $\mathbb{G}$ in PPET as an elliptic curve
whose group elements can be represented by 256-bit strings. As a result, both $u$ and $v$ fit in their respective nonces. Since the heartbeat request can accommodate up to $2,048$ bytes, it can easily hold $y$ and $z$.

\mypara{Group Elements vs. Random Numbers}
\looseness=-1
As described in Section~\ref{sec:eq_prot}, \T's commitment $v$ is uniformly distributed in $\GG$. Since $\GG$ is
instantiated as a subgroup of points on an elliptic curve, $v$ must be converted into a 256-bit string.
Hence the set of the binary strings representing all group elements is only a subset of the domain $\mathbb{B}^{256}$,
which is used for nonce generation. Since the server nonce is not always convertible to an element in $\mathbb{G}$,
directly using $v$ as a nonce in $\mathsf{ServerHello}$ indicates that there is a non-negligible chance that the server is \T.

\looseness=-1
To address this issue, we use Elligator~\cite{bernsteinElligatorEllipticcurvePoints2013} -- a censorship-circumvention
technique that transforms elliptic curve points into random-looking strings and vice versa -- so that an adversary cannot
differentiate between an elliptic curve point and a random string with non-negligible probability.
We use Curve25519~\cite{bernsteinCurve25519NewDiffieHellman2006} for $\GG$ since it is supported by Elligator.
To represent $v$ with a 256-bit nonce, we first apply the Elligator decoding function which returns a 254-bit binary string,
and then prepend the string with two random bits to get the nonce. To parse a 256-bit nonce to $v$, we discard
the two most significant bits and apply the Elligator encoding function to the remaining bits.

\textsc{Caveat.} Note that $\GG$ is used by both O-TEE and \T in the $\mathit{PPET}$ protocol. It is independent
of the group used in the Diffie-Hellman handshake in TLS. Moreover, we do not require O-TEE's message $u$
in the $\mathit{PPET}$ protocol to be encoded using Elligator, since O-TEE always sends out a random group element
regardless of whether its peer is \T or not. There is therefore no risk to verification obliviousness.

\subsection{A Complete Account of the ODT Scheme}

\looseness=-1
Figure~\ref{fig:public-odt-message-diagram} shows the computational steps of the ODT scheme. We first describe how
O-TEE runs as a TLS client and emits an ODT in every TLS handshake. Afterwards, we describe how \T runs as
a TLS server and verifies an ODT.

\begin{figure}[ht]
  \centering
  \scalebox{0.65}{
  \begin{tikzpicture}[every node/.style={transform shape}]
    \pgfmathsetmacro{\myrow}{-0.8};

    \node[draw] (enclave) {O-TEE as TLS client};
    \draw (enclave) -- ++(0cm, -6cm) -- +(0.5cm, 0) -- +(-0.5cm, 0);

    \node[draw, right=5.5cm of enclave] (server) {TLS Server (Aggressor)};
    \draw (server) -- ++(0cm, -6cm) -- +(0.5cm, 0) -- +(-0.5cm, 0);

    \begin{scope}
        \draw[->] ($ (enclave) + (0, \myrow * 1) $) -- node[above] (first message) {$\mathsf{ClientHello}\{n_0=u,X=g_{\text{DH}}^{x_s}\}$} ($ (server) + (0, \myrow * 1) $);

        \node[left=2.5cm of first message] {$\GG \riar u$};

        \draw[->] ($ (server) + (0, \myrow * 2) $) -- node[above] (second message) {$\mathsf{ServerHello}\{n_1=\iota(v),Y=g_{\text{DH}}^{y_s}\}, \mathsf{Cert}, \mathsf{Finished}$} ($ (enclave) + (0, \myrow * 2) $);

        \node[right=1cm of second message] {\shortstack[l]
          {
            $\ZZ_p \riar s$; \\
            $v = g^su^{\bar{w}}$
          } };

        \draw[->] ($ (enclave) + (0, \myrow * 2.6) $) -- ++(0.3, 0) -- node[right] {\shortstack[l]
          {
            Derive $\mathsf{HS}=Y^{x_s}$; \\
            $k=\mathsf{HKDF}(\mathsf{HS})$; \\
            $c=\mathsf{Select}( k)$; \\
            $w \leftarrow \mathsf{Measure}(c)$; \\
            $v=\iota^{-1}(n_1)$; \\
            $\ZZ_p \riar t; y = g^t; z = v^tu^{-wt};\sigma=\mathsf{Sign}(y,z,H(k))$
            } } ++(0, -2.0) -- ++ (-0.3, 0);

          \draw[->] ($ (enclave) + (0, \myrow * 6.5) $) -- node[above] {$\mathsf{Finished}, \mathsf{HeartbeatRequest}\{y, z,\sigma\}$} ($ (server) + (0, \myrow * 6.5) $);

        \draw[->] ($ (server) + (0, \myrow * 6.7) $) -- ++(-0.3, 0) -- node[left] {Continue data exchange} ++ (0, -0.4) -- ++(0.3, 0);

    \end{scope}

  \end{tikzpicture}
  }
    \caption{Process $P$'s TLS handshake through O-TEE.}
    \label{fig:public-odt-message-diagram}
\end{figure}

\subsubsection{O-TEE as TLS client}
\label{sec:client}
Triggered by process $P$'s request to establish a TLS connection, O-TEE executes the following steps.

\noindent\textsc{Step 1.}
It initiates a TLS handshake session by sending $\mathsf{ClientHello}$. It sets nonce
$n_0$ as the binary representation of a random group element $u\in_R\GG$. This is the first message in $\mathit{PPET}$.

\vspace{.5em}
\noindent\textsc{Step 2.}
After receiving $\mathsf{ServerHello}$ with nonce $n_1$, it computes the handshake secret
$\mathsf{HS}$ following the TLS specification and runs the witness generation procedure in Section~\ref{sec:wg} against
$P$'s virtual memory. It also treats $n_1$ as the Verifier message in $\mathit{PPET}$ and decodes $n_1$ into $v\in \GG$. It sets
\[
k\!=\!\mathsf{HKDF}(\mathsf{HS});
c\!=\!\mathsf{Select}(k);
w\!=\!\mathsf{Measure}(c);
v\!=\!\iota^{-1}(n_1\!);
\]
where $\mathsf{HKDF}$ is the key derivation function used in TLS and $\iota^{-1}$ is the Elligator encoding function used to
extract an elliptic curve point from a binary string.

\vspace{.5em}
\noindent\textsc{Step 3.}
It uses witness $w$ to compute $(y,z)$ in the third flow of $\mathit{PPET}$. It signs $(y,z)$
and $\mathcal{H}(k)$ with $sk_D$, and embeds both $(y,z)$ and the resulting signature $\sigma$ into a heartbeat
request message immediately after the handshake's end.

\vspace{.5em}
\noindent\textsc{Step 4.}
It passes all session keys to an external TLS library and is not involved in the rest of the
TLS connection in order to achieve better performance. $\Box$

\looseness=-1
Note that the triple $(y,z,\sigma)$ returned in Step 3 is the ODT in our scheme. Issuing a heartbeat request is
not a normal TLS client's behavior. However, it does not compromise obliviousness since O-TEE does this for all
TLS connections, except for TLS servers that explicitly disable heartbeat messages. We discuss alternatives
to heartbeat messages in Section~\ref{sec:TLS-compatibility}.

\looseness=-1
Since the hash of $\mathsf{HS}$ is covered by $\sigma$, a successful signature verification by the TLS
server implies that O-TEE, which owns $pk_D$, is also the TLS client holding $k$. This assures the TLS server
that O-TEE \emph{is} the communication endpoint with which it interacts and that the generated ODT is cryptographically
linked to this TLS connection. Revealing all TLS session keys \emph{after} signing $(y,z, \mathcal{H}(k))$ does
not break the ODT-TLS link. Although it can intercept the heartbeat request, \ADM cannot replace the ODT
because it cannot force O-TEE to generate a TLS shared key equal to $k$. Note that, to verify if \dn \emph{is} the device it
intends to check and also contains O-TEE owning $pk_D$, \T must validate $(y,z)$ using the verifier algorithm in $\mathit{PPET}$.

\subsubsection{\T\ as a TLS Server}
\label{sec:server}
\looseness=-1
Note that \N publishes all parameters used in witness generation and $\mathit{PPET}$,
such as group $\GG$ and generator $g$ so that \T has all the needed parameters
before starting any verification.

\looseness=-1
Suppose that \T sets up its own TLS server and waits for its agents' connection requests.
There are several ways for \T to obtain the expected value of the witness.
For instance, if \T trusts its command-and-control channel with the agent, it can use that
channel to secretly extract needed memory data. If the agent's memory state is less influenced
by external factors, \T can execute the agent on the system environment similar to the one in \dn.
The agent runs until it requests a TLS connection to \T's TLS server for ODT verification. The resulting
memory state mirrors the agent's state running on the remote devices. Note that address space layout
randomization \cite{pax2003pax,aslr-ccs04} used in commodity operating systems only randomizes
base addresses of code, stack, heap and libraries. Also note that these are only suggestions: our scheme
does not prescribe a specific way for \T to obtain the witness value.

Upon receiving a TLS connection request, \T proceeds as follows:

\noindent\textsc{Step 1.}
Following the TLS standard, it generates its Diffie-Hellman key share, computes $\mathsf{HS}$, and sets
\[
k=\mathsf{HKDF}(\mathsf{HS});c=\mathsf{Select}(k)\,.
\]
It obtains expected witness $w'$ and follows the $\mathit{PPET}$ protocol to compute the commitment
$v$ using the client nonce $n_0=u$ if it is in $\GG$. Otherwise, $v$ is computed with a randomly chosen $u$.
It then uses the Elligator decoding function $\iota$ to convert $v$ into
the 256-bit server nonce $n_1$ in its $\mathsf{ServerHello}$. After this step, it behaves exactly
according to the TLS standard.

\noindent\textsc{Step 2.}
If a heartbeat request message is received from the present TLS session,
\T responds to the heartbeat request following the TLS standard and saves the
ODT in the message for an offline verification in two steps described below.

(a) \T parses the binary string in the heartbeat request into $(y,z)$ and signature $\sigma$. It then
verifies $\sigma$ against $(y,z)$ and $\mathcal{H}(k)$ with the credential certified by \N.

(b) Following the $\mathit{PPET}$ protocol, \T checks if $z \stackrel{?}{=} y^s$ returns $\mathsf{True}$.
If so, it asserts that the device where its agent resides is \N-protected. $\Box$

\vspace{.3em} \looseness=-1
If $z\neq y^s$, \T cannot assert that the device is not \N-protected. There are various possible
reasons for that outcome. For instance, the ODT scheme's execution is attacked by an adversary
or the client process is not \T's agent.
Note that the agent can repeat the verification by making a new TLS connection.

\subsection{Experiments}
We built a prototype of the proposed ODT scheme to assess its performance and practicality.
All experiments were conducted on a laptop with Intel i5-10210U CPU and Manjaro Linux with the 5.15.155 kernel.

\subsubsection{Implementation}
To implement \T's TLS server, we integrate the operations specified in Section~\ref{sec:server}
with OpenSSL server version 3.2.0-alpha2. In addition to the OpenSSL's Curve25519 implementation, we use the
Elligator functionality of the Monocypher\footnote{https://monocypher.org/} library to implement
cryptographic parts of the protocol.

\looseness=-1
For the O-TEE, we use the Intel SGX SSL\footnote{https://github.com/intel/intel-sgx-ssl}
library that embeds OpenSSL in an SGX enclave.
We modify the TLS client functionality of this library for operations specified in Section~\ref{sec:client}.
An API is added to send the heartbeat request containing the ODT following the $\mathsf{Finished}$ flow.
Since the cryptographic setup of the ODT scheme is separate from the one for TLS, the scheme does
not restrict the cryptographic configuration of existing TLS clients and servers.

\subsubsection{Performance Results}
On the experimental platform, we set up two TLS clients (a native client and one using O-TEE) and
two TLS servers (a native server and an ODT server for verification). Since all TLS connections under
measurement are local, the results are dominated by computation time rather than network delays, which
more faithfully reflect the performance of the scheme. Both the ODT server and O-TEE sample the
heap in five locations and try to perform a verification for all incoming and outgoing connections respectively.

\looseness=-1
We run two experiments, each repeated 1,000 times, to get the averaged data. The first assesses how the
ODT server performs and how it differs from the native server. We use the $\mathsf{curl}$ tool to
connect to both servers and measure the time between sending $\mathsf{ClientHello}$ and receiving
$\mathsf{ServerHello}$. Note that although $\mathsf{curl}$ uses the native TLS client, the ODT
server still generates $v$ using a randomly chosen group element $u$. Table~\ref{tab:measurement-results}
reports the results. The ODT server incurs a slowdown of less than 1 millisecond per request due to parsing
the client nonce into $u$ and calculating the commitment $v$. Among the addition time spent by the ODT server,
the $\mathit{PPET}$ commitment generation and Elligator decoding cost 0.33 $ms$ and 0.17 $ms$, respectively.
The performance difference with the native TLS server does not compromise verification obliviousness.
A detailed analysis of side-channel attacks is presented in Section~\ref{sec:sidechannel}.

\looseness=-1
\begin{table}[ht]
    \centering
    \caption{Server response time measured using $\mathsf{curl}$ (in $ms$).}
    \begin{tabu}{rr}
    \tabucline[1pt]{-}
       ODT Server (overall cost) &  $8.25 \pm 0.71$ \\
        (a) PPET commitment & $0.33 \pm 0.26$ \\
        (b) Elligator decoding & $0.17 \pm 0.11$ \\ \hline
       OpenSSL Server  & $7.93 \pm 0.73$ \\
    \tabucline[1pt]{-}
    \end{tabu}
    \label{tab:measurement-results}
\end{table}

\looseness=-1
The second experiment measures the performance impact of O-TEE. We measure the time for a complete handshake
session in four combinations of client-server setups. The results are shown in Table~\ref{tab:handshake}.
Compared with the native TLS client, it costs O-TEE around $144ms$ more per handshake. The overhead is
due to cryptographic operations (including generating $u$, encoding $v$, and calculating the ODT),
witness generation, and extra time for SGX \texttt{ocall} and \texttt{ecall}.  While none of these costs
can be saved, we can reduce the handshake time by letting O-TEE send the $\mathsf{Finished}$ message
before ODT and witness generation. This optimization does not affect security of the scheme.

\begin{table}[ht]
    \centering
   \caption{Total handshake duration in four combinations of TLS connections (in $ms$)}
    \begin{tabu}{rlrr}
    \tabucline[1pt]{-}
       & & ODT server & OpenSSL Server \\ \tabucline[0.5pt]{-}
      O-TEE      &     & $164.30 \pm 2.81$  & $164.10 \pm 2.72$  \\
      OpenSSL Client & & $20.40 \pm 0.65$ & $20.08 \pm 0.63$  \\
      \tabucline[1pt]{-}
    \end{tabu}
    \label{tab:handshake}
\end{table}

\section{Security Analysis}
\label{sec:sec-analysis}
\subsection{Background on the Tamarin Prover}
Tamarin \cite{tamarin,cav13} is a tool for the mechanized analysis  of security protocols.
Tamarin works with a formal model of the security protocol, its desired properties, and the adversary.
It is used to construct either: (1) a proof that the protocol is secure, i.e., its properties hold
in the specified adversarial model (even when run with an unbounded number of protocol participants),
or (2) a counter-example, i.e., an attack on the protocol. Tamarin has been used for formal
verification of a wide range of large-scale real-world protocols \cite{tamarinIEEE2022}, including 5G
authenticated key agreement \cite{5G-2018},  TLS v1.3~\cite{tamarin-tls},
and the electronic payment standard EMV \cite{EMV-2021}.

\looseness=-1
Tamarin reasons using a symbolic model of cryptography, going back to the seminal work of Dolev and Yao
\cite{dolevSecurityPublicKey1983}. Protocols are represented as  labeled transition systems augmented with equational
theories formalizing common cryptographic operators.
This kind of model abstracts away from a low-level implementation of cryptography,
focusing instead on the properties of idealized cryptographic operators.

\subsection{Protocol Model in Tamarin}
To prove binding integrity, we first model the protocol.

\label{sec:prot-model-tamar}
\mypara{Protocol roles}
\looseness=-1
We specify five protocol roles: (1) an aggressor role that can create agents, (2) an agent role (\agent)
created by an aggressor and that can run on a device, (3) a device role that can run agents and be
assigned an O-TEE, (4) a neutral party (\N) role that can instantiate O-TEE bound to a device,
and (5) an O-TEE role.

\mypara{Adversary model}
The Tamarin adversary represents the dishonest \ADM. This adversary has all the capabilities of the
standard (Dolev-Yao) adversary used in symbolic models: it can read, create, modify, and block any
messages created or sent over the network. Moreover, it can compromise any device and gain its capabilities,
though it cannot compromise \N. On compromised devices, it can interrupt and resume the execution of an
attached O-TEE and read, create, modify, and block all messages the device sends or receives over
the network or from O-TEE. If it interrupts an O-TEE, the adversary can modify the data O-TEE reads.
As explained in Section~\ref{sec:mms}, the O-TEE measurement scheme ensures that the co-residence
witness generated by O-TEE after an interrupt is invalid. We also assume that the adversary can
mirror only a small subset of a running process's memory between devices instead of the whole memory.
We justify this using the argument presented in Section~\ref{sec:mms}.

\mypara{Measurements} \looseness=-1
As described in Section~\ref{sec:wg}, the core of our protocol is the witness generation scheme that has
two functions: $\mathsf{Select}$ and $\mathsf{Measure}$. While the protocol measures the agent's memory,
we model an abstraction of this and assume O-TEE measures \emph{properties} of a process.
Note that memory measurement is one way to achieve binding integrity and that other measurements,
such as cache or control flow measurements, are possible, provided they identify the process.

\looseness=-1
Given $\agent$ with a set of properties $\mathcal{P}(\agent)$, O-TEE measures a subset of these properties.
To select the subset, O-TEE uses the $\mathsf{Select}(k)$ function that takes as arguments a key $k$,
and returns a set $\mathcal{S}$ of positions that are measured. We do not model subsets explicitly and
instead leave them in their symbolic form $\mathcal{S} = \mathsf{Select}(k)$, where
$k$ represents the shared secret between O-TEE and the aggressor.

\looseness=-1
We model measurement as a function $\mathsf{Measure}(\mathcal{S}, \mathcal{P}(\agent))$ that returns a
witness $w$ given the set of indices $\mathcal{S}$ and the properties $\mathcal{P}(\agent)$ of an agent process.
As already mentioned, the adversary can interrupt O-TEE and recover the indices $\mathcal{S}$ or
modify the values O-TEE reads.

\mypara{Verification}
We model verification in the same way we specify it in our scheme. The aggressor first verifies that
signature $\sigma$ is valid and belongs to a registered O-TEE, and then verifies if the witness
response is correct. If both checks succeed, we consider the verification to be successful.

In summary, given public O-TEE measurement functions $\mathsf{Select}$ and $\mathsf{Measure}$, we define
$w = \mathsf{Measure}(\mathsf{Select}(k), \mathcal{P}(\agent))$ to be the witness extracted from an agent
process $\agent$ with properties $\mathcal{P}(\agent)$, given a key $k$ shared between the aggressor and O-TEE.
O-TEE's measurement and signature ensure that the aggressor can establish a binding between
its agent process, O-TEE, and the TLS connection.

\subsection{Binding Integrity}

\begin{figure}
    \centering

    \begin{lstlisting}[frame=single, basicstyle=\scriptsize\ttfamily]
lemma binding_integrity:
 "All AGR O-TEE AGT prop #i.
 AcceptedVerificationWithAgent(AGR, O-TEE, AGT, prop) @i
 ==>  (Ex NP D #j #k #l #m #n.
     OTEERegisteredOnDevice(NP, O-TEE, D) @j &
     OTEEHonestPropertyRead(O-TEE, D, AGT, prop) @k &
     AggressorCreateAgent(AGR, AGT, prop) @l &
     AggrDHKey(AGR, DHE) @m & OTEEDHKey(O-TEE, DHE) @n)"
    \end{lstlisting}

    \caption{Binding integrity property as specified in Tamarin.}
    \label{fig:binding-integrity}
\end{figure}

\looseness=-1
We formalize binding integrity from Section~\ref{sec:def} as a trace property in Tamarin as shown in Figure~\ref{fig:binding-integrity}.
We prove that a successful verification by an aggressor \textsf{AGR} that believes it is talking with an \textsf{O-TEE} and that created its own agent process \textsf{AGT} exhibiting properties \textsf{prop}, implies: 1) that the \textsf{O-TEE} is a genuine O-TEE registered on device \textsf{D} by a neutral party \textsf{NP}, 2) that the \textsf{O-TEE} measured the agent process \textsf{AGT} residing on device \textsf{D} that exhibited the properties \textsf{prop}, 3) the aggressor \textsf{AGR} created the agent \textsf{AGT} with properties \textsf{prop}, and 4) \textsf{AGR} and \textsf{O-TEE} derived the same Diffie-Hellman secret $\textsf{DHE}$.

\looseness=-1
In Section~\ref{sec:def} we note that the intrinsic challenge of ODT verification is to determine how entities to be authenticated are identified.
In our ODT scheme, the aggressor uses distinct injected agents to build a namespace of devices.
Since an agent's identity is ambiguous, i.e. there is no canonical way to refer to a piece of code, we use the properties that the code itself exhibits as the agent's identity.
As explained in Section~\ref{sec:mms}, our approach to witness generation guarantees co-residence:  when the measurement succeeds,
the process memory that was measured is on the same device as O-TEE.
This, combined with the fact the aggressor connects to an agent over a TLS channel, enables our ODT scheme to achieve binding integrity as proven in Tamarin.

\subsection{Verification Obliviousness}
\label{sec:verif-obliv}

\looseness=-1
We formalize verification obliviousness as the adversary's advantage over a random guess in differentiating a given process's two TLS handshakes: with a regular server and with \T performing ODT verification. Specifically, we define it as a game where the adversary (\ADM), given an \N-protected device $D$ and all secrets in its O-TEE, is challenged to determine if ODT verification took place with an oracle that simulates either a normal TLS server or \T depending on a random bit.
Note that verification obliviousness is not dependent on O-TEE's security.
The formal definition is below.

\begin{definition}[Verification Obliviousness]
\label{def:obl}
  Let $D$ be a device with O-TEE.
  Let \T be an aggressor and $p_{\mathcal{A}}$ be \T's agent on device $D$.
  Let $\OO(b)$ be an oracle that, with a random input bit $b$, simulates a standard TLS server for $b=0$ or \T otherwise.
  For all protocols $\mathit{ODT}$ and all adversaries $\adv$,
  we define the advantage function:
    \begin{equation*}
    \begin{split}
      \advantage{\mathsf{VO}}{\mathit{ODT}} = \bigg| & \frac{1}{2} - \PR \Big[ \T \riar p_{\mathcal{A}}; \{0, 1\} \riar b; \\
      & \adv^{\OO(b)}(D, p_{\mathcal{A}}) \riar b': b' == b \Big] \bigg|.
    \end{split}
  \end{equation*}

  An ODT protocol satisfies verification obliviousness if, for
  all efficient algorithms $\adv$ that have complete control over the device and its O-TEE, $\advantage{\mathsf{VO}}{\mathit{ODT}}$ is negligible.
\end{definition}

\looseness=-1
To show our scheme satisfies Definition~\ref{def:obl}, we first prove that the verifier's message $v$ of the $\mathit{PPET}$ protocol in Figure~\ref{fig:wv} is random in $\GG$.
This is a necessary condition for Elligator to encode elliptic curve points into uniformly random looking strings, a fact we use in our verification obliviousness proof.
We call this intermediate security property: \emph{verifier message uniformity}.

\begin{definition}[Verifier Message Uniformity]
  Let $\GG$ be a group of order $p$ and $g$ a generator of the group.
  Let $\WW$ be the domain of possible witnesses.
  For all PPET protocols $\mathit{PPET}$, all adversaries $\adv$,
  we define the advantage function:
  \begin{equation*}
    \begin{split}
      \advantage{\mathsf{VM\_UNI}}{\mathit{PPET}} = \; & \Big| \frac{1}{2} - \PR \Big[ \adv \riar u \in \GG; \ZZ_p \riar s; \\
         &  \WW \riar w'; v_0 = g^su^{w'}; \GG \riar v_1; \\
         & \{0, 1\} \riar b; \adv(v_b) \riar b': b' == b \Big] \Big|.
    \end{split}
  \end{equation*}
  A witness verification protocol $\mathit{PPET}$ satisfies verifier message uniformity if, for all
  efficient algorithms $\adv$, $\advantage{\mathsf{VM\_UNI}}{\mathit{PPET}}$ is negligible.
\end{definition}

\begin{lemma}
\label{lemma:verifier-message-uniformity}
  The Privacy-Preserving Equality Test Protocol $\mathit{PPET}$ in Figure~\ref{fig:wv} satisfies verifier message uniformity.
\end{lemma}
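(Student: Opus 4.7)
The plan is to give an unconditional, information-theoretic argument: $v_0$ and $v_1$ are identically distributed, so in fact $\advantage{\mathsf{VM\_UNI}}{\mathit{PPET}} = 0$ for every adversary $\adv$, and no cryptographic assumption (not even DDH) is required at this step.

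First I would fix any particular element $u \in \GG$ output by $\adv$ and any particular $w' \in \WW$ drawn from the sampling distribution, and show that, conditioned on this pair, $v_0 = g^s u^{w'}$ is uniformly distributed over $\GG$ as $s$ ranges over $\ZZ_p$. The two ingredients are: (i) because $\GG$ has prime order $p$ and $g$ is a generator, the map $s \mapsto g^s$ is a bijection $\ZZ_p \to \GG$, so $g^s$ is uniform on $\GG$; and (ii) $u^{w'}$ is a fixed element of $\GG$ under the conditioning, and multiplication by any fixed group element is a bijection on $\GG$, so $g^s u^{w'}$ is uniform on $\GG$ as well. Averaging out the conditioning on $u$ and $w'$, the marginal distribution of $v_0$ equals the uniform distribution on $\GG$, which is exactly the distribution of $v_1$.

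Since the two worlds $b=0$ and $b=1$ present $\adv$ with statistically identical views, $\adv$'s guess $b'$ is independent of $b$, giving $\PR[b'=b] = 1/2$ and thus zero advantage. This implies the stated negligibility bound trivially.

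The only subtlety worth flagging, and arguably the "hard part" of an otherwise short proof, is book-keeping about the order of operations in the game: $\adv$ commits to $u$ before $s$ and $w'$ are sampled, so no adaptive choice of $u$ can bias the masking term $g^s$. Once this ordering is made explicit, the uniformity of $g^s$ carries through the multiplication by $u^{w'}$ with no further work, and the lemma follows.
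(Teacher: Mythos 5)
Your proposal is correct and takes essentially the same route as the paper: both argue information-theoretically that $g^s$ is uniform on the prime-order group $\GG$, that multiplying by the fixed element $u^{w'}$ preserves uniformity, and hence that $v_0$ and $v_1$ are identically distributed, giving zero (not merely negligible) advantage. Your version is slightly more explicit about conditioning on $u$ and $w'$ and about the game's order of quantification, but the underlying argument is identical.
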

\begin{proof}
\looseness=-1
  Observe that $u^{w'}$ is an element of $\GG$.
  For a given uniformly random group element $g^s$, the probability that $g^su^{w'}$ is equal to a
  given group element is $1/p$. Since this holds for any possible value of $u^{w'}$,
  $v_0$ is uniformly distributed in $\GG$.
  Since $v_1$ is also uniformly distributed in $\GG$, the statistical difference between
  $v_0$ and $v_1$ is zero. Hence, $\advantage{\mathsf{VM\_UNI}}{\mathit{PPET}}$ is negligible.
\end{proof}

\begin{theorem}
\label{thm:obl}
  Our ODT scheme satisfies Definition~\ref{def:obl}.
\end{theorem}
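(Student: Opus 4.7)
The plan is to prove Theorem~\ref{thm:obl} by a transcript-indistinguishability reduction: show that the view of $\adv$ under the two values of $b$ is statistically indistinguishable, with the gap bounded by Lemma~\ref{lemma:verifier-message-uniformity} combined with the statistical guarantee of Elligator. The starting observation is that O-TEE executes exactly the same code in both worlds, samples its internal coins identically, and in particular always sends $u \in_R \GG$ in $n_0$, always runs the witness procedure on $P$'s memory, and always emits the heartbeat carrying $(y,z,\sigma)$. Hence the only TLS field in the oracle-to-O-TEE direction whose distribution can differ across the bit $b$ is the server nonce $n_1$; every other field in $\mathsf{ServerHello}$ (and in the subsequent records) is prescribed by the TLS standard and produced identically by $\oracle(0)$ and $\oracle(1)$.

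Next I would compare the two distributions of $n_1$. Under $b=0$, a standard server samples $n_1$ uniformly from $\mathbb{B}^{256}$. Under $b=1$, \T computes $v = g^s u^{w'}$ for a fresh $s \in_R \ZZ_p$, applies Elligator to obtain a 254-bit representation $\iota(v)$, and prepends two fresh uniform bits. Lemma~\ref{lemma:verifier-message-uniformity} gives that $v$ is uniformly distributed in $\GG$, independently of $u$ and of $w'$. The Elligator guarantee for Curve25519 is that, for $v$ uniform on the encodable subset of the curve, $\iota(v)$ is statistically close to uniform on $\mathbb{B}^{254}$; prepending two independent random bits then yields a string statistically close to uniform on $\mathbb{B}^{256}$. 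Consequently the statistical distance between the two distributions of $n_1$ is negligible. Since O-TEE's downstream outputs $(y,z,\sigma)$ are deterministic functions of $n_1$ together with randomness whose law is the same in both worlds, the same bound carries over to the full view seen by $\adv$, and thus $\advantage{\mathsf{VO}}{\mathit{ODT}}$ is negligible.

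The step I expect to be the main obstacle is handling the Elligator encoding domain carefully: not every $v \in \GG$ is encodable, so \T must resample $s$ (hence $v$) until encoding succeeds. I would absorb this rejection sampling into the specification of $\oracle(1)$ and argue that, conditioned on success, the output is exactly uniform on $\mathbb{B}^{254}$; the rejection probability is a fixed constant depending only on the curve, independent of the secret witness $w'$ and of the client message $u$, so it is neither signaled to $\adv$ nor exploitable by it. A subsidiary subtlety is that O-TEE's own $n_0$ is itself a non-uniform encoding of a group element, but since this non-uniformity is identical across the two worlds it cannot contribute to $\advantage{\mathsf{VO}}{\mathit{ODT}}$; likewise the heartbeat from O-TEE, and its echo by any heartbeat-supporting server, are identically distributed across $b$ and fold into the transcript argument above without additional loss.
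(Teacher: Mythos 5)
Your proposal is correct and follows essentially the same route as the paper's proof: isolate the server nonce $n_1$ as the only field whose distribution depends on $b$, invoke Lemma~\ref{lemma:verifier-message-uniformity} to get uniformity of $v$ in $\GG$, and use the Elligator guarantee to conclude that the encoded nonce is indistinguishable from a uniform 256-bit string, while observing that O-TEE behaves identically in both worlds. Your write-up is in fact somewhat more careful than the paper's, which does not spell out the data-processing step for O-TEE's downstream outputs $(y,z,\sigma)$ or the rejection sampling needed because not every point of Curve25519 is Elligator-encodable.
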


\begin{proof}
In order to determine whether a verification took place, the adversary can run $p_{\mathcal{A}}$ and other processes of its choice on device $D$ and look at: (1) the oracle's response $v$, and (2) the behavior of O-TEE.
We discuss timing attacks and other side-channels in Section~\ref{sec:discussion}.

\looseness=-1
For (1), the only difference the adversary can detect between an honest TLS server and the aggressor is in how the $\textsf{ServerHello}$ nonce is generated.
When the aggressor sends $\iota^{-1}(v)$, the nonce's distribution might be different from a uniformly random string that is sent by a normal TLS server.
However, according to Lemma~\ref{lemma:verifier-message-uniformity}, we know that $v$ is uniformly random in $\GG$.
After $v$ is Elligator encoded, we are guaranteed that an adversary has a negligible probability to differentiate the Elligator-encoded $v$ from a uniformly random looking string.
For (2), the adversary cannot use O-TEE as an oracle to test if a given process is performing verification, as O-TEE always behaves in the same way for all processes that invoke a TLS connection.
Therefore, we conclude that $\advantage{\mathsf{VO}}{\mathit{ODT}}$ is negligible.
\end{proof}

\looseness=-1
The proof shows that \ADM cannot use our ODT protocol to detect a verification attempt.
It does not capture the probability that \ADM can detect the agent.
The likelihood of agent detection ultimately depends on the arms race between \T and \ADM, whose outcome is not prescribed by our design.
We show below that detecting an agent for ODT verification is as hard as detecting TLS-using malware.

\looseness=-1
\mypara{Discussion on Agent Detection}
According to several surveys \cite{OR19,oh2021survey}, it is popular for malware to use TLS to communicate with its backend server due to the widely used HTTPS protocol. For instance, the TLS-based Cobalt Strike Command \& Control channel was widely used by ransomware as reported by CISCO \cite{CISCO20} and was also later found in the SolarWinds attack \cite{Fireye}. Accompanying this trend is the evolution of malware detection techniques based on TLS anomalies, benefiting from advances in machine learning. Traffic data often used in anomaly analysis includes port numbers, the amount of transported data, connection duration, and handshake parameters \cite{AM16}. The potential anomaly exhibited by malware is due to the involved task, such as downloading a payload or uploading stolen data. Since the ODT verification agent does not have such tasks, it can avoid detection by making ordinary TLS connections.

\looseness=-1
Our ODT scheme imposes no requirements on an agent's memory contents. Thus, memory contents involved in O-TEE's measurement do not have a specific pattern for \ADM to fingerprint an agent.
\T can utilize existing anti-detection techniques  \cite{ANSB19}, undocumented evasion techniques, and zero-day exploits to conceal its agent's existence. The mere existence of a TLS connection to a rarely-accessed server could appear suspicious in certain scenarios. \T may evade such detection by covertly compromising existing processes that use TLS and connecting them to its servers providing legitimate Internet services (e.g., DNS or NTS).

\looseness=-1
Lastly, we acknowledge that not all aggressors can invade target devices with a stealthy agent performing TLS connections. However, it is feasible for a state-backed aggressor, which is the primary subject of the ICRC's emblem application and international humanitarian law, to have such a capability.
We leave it to future work to investigate ODT schema with less demanding requirements.

\subsection{Security Preservation}
\label{sec:sec-preservation}
Security preservation concerns a TLS server that extracts information about $D$ from its TLS connections. We formulate it as the adversary's winning probability in the game where it guesses the value stored at a specific memory location.
We quantify the probability that the adversary guesses correctly in relation to the memory's size, the number of memory locations O-TEE measures, the number of values each position can have, and the number of adversary queries.
In the proof, we show that this probability is negligible.

\begin{definition}[Security preservation]
\label{def:sec-preservation}
  Let $D$ be a device with O-TEE. Let $X$ be the set of values that can be stored at a memory location.
   Let $p$ be a process on $D$ with memory contents $(x_i)_{i \in I}$, where $I$ is the index set of memory locations and $x_i \in X$ is the value stored at location $i$.
  Let $\OO_p$ be an oracle that, for the process $p$, simulates the execution of O-TEE on device $D$ that measures $C$ locations up to $q$ times.
  For all protocols $\mathsf{ODT}$, all adversaries $\adv$ that output a set $\{ x'_{j_1}, x'_{j_2}, \dots, x'_{j_C} \}$ where $x'_j \in X$, we define the advantage function:
  \begin{equation*}
    \begin{split}
      \advantage{\mathsf{SP}}{\mathit{ODT}} = & \;
      \PR[ I_t = \{i_1, i_2, \dots, i_C\}, \forall i \in I; \\
      & \adv^{\OO_p}(I_t) \riar \{ x'_{i_1}, \dots, x'_{i_C} \} : \forall i \in I_t. x'_i == x_i ].
    \end{split}
  \end{equation*}
\looseness=-1
An ODT protocol satisfies security preservation if, for
  all efficient algorithms $\adv$, $\advantage{\mathsf{SP}}{\mathit{ODT}}$ is negligible.
\end{definition}

\begin{theorem}
\label{thm:sec-pre}
  Our ODT scheme satisfies Definition~\ref{def:sec-preservation}. $\Box$
\end{theorem}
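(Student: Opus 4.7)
The plan is to reduce security preservation to two facts: (i) the privacy preservation of $\mathit{PPET}$ established in Theorem~\ref{thm:privacy-preservation}, and (ii) the unpredictability of the memory locations chosen by $\mathsf{Select}$. The key insight is that each oracle query leaks at most a single bit to the adversary, namely whether its guessed witness $w'$ coincides with the true witness $w$; and since $w$ hashes $C$ high-entropy memory values at pseudorandomly selected locations, recovering the target values at $I_t$ is infeasible.

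First I would apply a standard hybrid argument over the $q$ oracle queries. In the $j$-th hop, whenever the adversary's commitment $v_j$ lies in $\GG$ and $w'_j \neq w_j$, Theorem~\ref{thm:privacy-preservation} lets me replace the prover's response $z_j = v_j^{t_j} u_j^{-w_j t_j}$ with a uniformly random element of $\GG$, at an indistinguishability cost of at most $\epsilon_{\mathsf{PP}}(\lambda)$ per query. After all $q$ hops, the adversary's view consists of $I_t$ and, in effect, only the $q$ booleans $[w_j = w'_j]$ that it extracts by executing the verifier's check $z_j \stackrel{?}{=} y_j^{s_j}$; the overall hybrid loss is at most $q \cdot \epsilon_{\mathsf{PP}}(\lambda)$.

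Next I would bound the success probability in this simulated game. Modeling $\mathcal{H}$ and $\mathsf{M}$ so that $\mathsf{Select}(k_j)$ acts as a uniform $C$-subset of $\Omega$, and assuming the memory contents at the sampled locations carry min-entropy $\log|X|$ each from the adversary's standpoint, the probability that any given query returns ``yes'' is at most $|X|^{-C}$, because the adversary must correctly predict all $C$ measured values to match the hash. Conditioned on every response being ``no'', the adversary has only eliminated $q$ specific hash preimages, so the posterior distribution of $(x_i)_{i \in I_t}$ remains statistically close to its prior and the best remaining guessing strategy wins with probability at most $|X|^{-C}$. A union bound then gives
\[
\advantage{\mathsf{SP}}{\mathit{ODT}} \le |X|^{-C} + q \cdot |X|^{-C} + q \cdot \epsilon_{\mathsf{PP}}(\lambda),
\]
which is negligible whenever $|X|^{C}$ is super-polynomial in the security parameter.

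The main obstacle is the hybrid step itself: because the adversary chooses its guesses $w'_j$ adaptively based on earlier transcripts, the reduction to the single-shot PPET privacy game must embed the challenge into an otherwise fully simulated query sequence, which requires the reduction to know the memory contents $x_i$ so that it can produce honest hashes in the few queries where $w'_j = w_j$. A secondary subtlety is the min-entropy assumption on memory: the bound degrades sharply for low-entropy address ranges (e.g., zeroed pages or known code regions), so the theorem is most naturally interpreted with $\Omega$ restricted to sufficiently random portions of the address space, consistent with the design choice in Section~\ref{sec:wg}.
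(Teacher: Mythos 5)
Your proposal is correct in substance and lands on the same core idea as the paper's proof --- security preservation reduces to an \emph{online} guessing game in which each oracle interaction tests one candidate witness --- but the two arguments are decomposed differently. You make the per-query leakage bound a formal consequence of Theorem~\ref{thm:privacy-preservation}: a hybrid over the $q$ queries replaces each response $z_j$ with a random group element whenever $w'_j \neq w_j$, at cost $q\cdot\epsilon_{\mathsf{PP}}$, so the adversary provably extracts only the equality bits. The paper never performs this reduction; it takes the equality-oracle abstraction as given (the oracle $\OO_p$ in Definition~\ref{def:sec-preservation} directly models O-TEE answering measurement queries) and spends its effort on the combinatorics. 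In exchange, the paper's counting is more refined than yours in two respects. First, it tracks the adversary's accumulating knowledge: after $l$ rejected guesses on the same measured subset the success probability of the next guess rises to $\bigl((|X|-2^k)^{C}-l\bigr)^{-1}$ rather than staying at $|X|^{-C}$, and it averages the number of repetitions per subset as $q\binom{|I|}{C}^{-1}$; it then proves negligibility with $C$ (the number of measured locations) as the security parameter, rather than assuming $|X|^C$ is super-polynomial outright. Second, it analyzes a targeted scenario you do not treat: an adversary knowing the entire memory except a 256-bit key, for which it derives the concrete bound $2^{-59}$ after $10^6$ queries on a one-megabyte process. Your min-entropy caveat about low-entropy address ranges is exactly the concern the paper's parameter $k$ (known bits per word) is meant to capture, so the two treatments are compatible. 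In short: your route is cleaner cryptographically (it actually justifies why only one bit leaks per query), the paper's is sharper combinatorially and quantitatively; neither step you propose would fail, and the hybrid-embedding obstacle you flag is resolvable since the reduction simulates the device and therefore knows $(x_i)_{i\in I}$.
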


\looseness=-1
Our proof (in the appendix) considers two different attack scenarios: extracting information from a low-entropy process, and extracting information about a 256-bit secret key with the knowledge of contents at all other memory locations.
For the first scenario, we show that $\adv$ only has a negligible success probability given that $C$ is sufficiently large.
We can lower $\adv$'s success chance by measuring more locations.
However, this demands \T to have a more complete and accurate view of the agent's memory.

\looseness=-1
For the second scenario, assuming that O-TEE measures one 64-bit word, we show that $\adv$ learns at least one of the four words of the 256-bit secret key with a small ($2^{-59}$) probability after making $10^6$ guesses for a process with one megabyte of memory.
The relatively high probability is not an issue considering $\adv$ relies on the device to initiate a TLS connection with its server.
This scenario also represents an ideal case for $\adv$ where it knows all other memory locations, which is generally infeasible in practice.
Using more locations to measure also lowers the probability.

\section{Discussions and Related Work}
\label{sec:discussion}
\subsection{Alternatives to TLS v1.3}
\looseness=-1
Among the protocols in the TLS family, the ODT scheme is compatible with DTLS v1.3 \cite{DTLS13},
a variant of TLS v1.3 for secure UDP communications.
However, TLS v1.2 \cite{TLS12} (and its DTLS variant) do not fit the ODT scheme well. Servers using TLS v1.2
typically use public-key encryption-based handshakes where $\mathsf{ServerHello}$ is sent before the handshake
secret is generated. As a result, \T cannot send its commitment $v$ as a server nonce and instead
has to send it in TLS records that carry application data. This undermines verification obliviousness
since there is no uniform pattern for TLS servers' application data sending and O-TEE has to respond differently.
It is thus seems infeasible to prove that \T's message is indistinguishable from those sent by regular TLS servers.

\looseness=-1
Although there are other protocols involving random data exchanges and shared keys, none of them is more suitable
for the ODT scheme than TLS. Application layer protocols (such as Tor \cite{TOR}, IRC, and SSH) are not generic
enough for all applications. Network layer protocols (such as the Internet Key Exchange protocol \cite{IKE} for IPSec)
are not application specific; hence, it is difficult to link the connection to a particular process.

\subsection{Compatibility with TLS Implementation}
\label{sec:TLS-compatibility}
\T's server is fully compatible with the TLS specification and provides services to all TLS clients.
However, we discovered that the heartbeat request sent by O-TEE in our prototype disrupts TLS
connections with servers that use OpenSSL.
Although TLS v1.3 \cite{TLS13} defines the heartbeat extension, OpenSSL maintainers
removed it from the code due to lack of real-world use-cases and concerns over implementation
bugs.\footnote{\url{https://github.com/openssl/openssl/issues/4856}}
We believe that the ICRC emblems are an important real-world use-case that would warrant the
re-introduction of heartbeats into OpenSSL.

\looseness=-1
To avoid the compatibility issue, an O-TEE implementation can instead use an ICMP Echo-Request messages
to deliver ODTs to TLS servers. This alternative does not compromise binding integrity, since
our scheme does not depend on the security of the TLS connection to protect this message.

\subsection{Side-Channel Attacks on Verification Obliviousness}
\label{sec:sidechannel}
\looseness=-1
We briefly consider the risks of compromising verification obliviousness via side-channel attacks
on \T's messages and O-TEE. Since \T is remote from \ADM, the only side-channel observable to \ADM is
based on time intervals between sending $\mathsf{ClientHello}$ and arrival of $\mathsf{ServerHello}$.
However, \T can introduce an artificial delay in its non-ODT computations to mask the difference.
Unless \ADM knows the hardware that \T's server runs, it cannot detect this delay.

\looseness=-1
On the client side, the proposed ODT scheme eliminates the device's side-channel leakage about
ODT verification. O-TEEs behave in the same fashion regardless whether the TLS peer is a regular server or \T.
Furthermore, the scheme does not require \T's agent to carry out special operations for ODT verification.
For example, it does not perform a local attestation against an enclave O-TEE. While we acknowledge
that there are side-channel attacks against secrets of various TEEs, coping with them
is orthogonal to this work.

\looseness=-1
\subsection{Limitations of Our ODT Scheme}
\label{sec:limitations}
Our scheme cannot be deployed on devices without a hardware-based TEE or virtual machines in a cloud even
if confidential VM techniques such as Intel TDX are in place.
It also cannot help aggressors without an agent on the target devices.
Note that such aggressors can resort to ADEM \cite{adem} to perform preliminary checks.
Also, \N can deploy ADEM and ODT side by side.

\looseness=-1
The ODT schemes guarantees that if an ODT passes \T's verification, the device is \N-protected.
However, the converse is not guaranteed. Thus, \T cannot conclude that the device is not \N-protected
if verification fails. While it seems impossible to prove an ODT's non-existence, it is an open problem
how to use the ODT verification transcript as an unforgeable proof for \T's due diligence.

\looseness=-1
Another limitation is the dependency on TLS v1.3, especially for the Diffie-Hellman handshake.
It may require changes to work with future versions of TLS or a successor protocol suite. However,
if the handshake algorithm is unchanged, we are optimistic that the adaptation will only involve message
encoding, rather than changes of core algorithms.

\subsection{Related Work}
As mentioned earlier, ADEM~\cite{adem} proposed an emblem based on digital signatures for use in network
communications. While it ensures verification obliviousness, the resulting emblem-device binding is
insecure against a malicious kernel.

\looseness=-1
The functionality of ODT is related to RoT attestation whereby a prover device attests its RoT attributes
to a trusted verifier. These attributes include RoT genuineness \cite{tpm,dcap}, RoT presence \cite{presence_att},
and the distance \cite{CRSC12,dhar2020proximitee} to the verifier.
More specifically, our protocol is related to remotely attested TLS protocols \cite{TC4SE,RATLS,IntelRA-TLS,TSL}.
These protocols combine TLS with remote attestation to create \emph{trusted channels}, i.e. secure channels
where one or both endpoint(s) is/are attested to the other. However, they are incompatible with
verification obliviousness.

\looseness=-1
In terms of verification obliviousness, the ODT scheme is related to privacy protection techniques for hiding
sensitive information that would otherwise be exposed to observers monitoring system or network actions.
Related work includes oblivious RAM (ORAM) \cite{ORAM} and Private Information Retrieval (PIR) \cite{pir},
which prevent an entity's interest in particular data items from leaking due to its actions on the data set.
Anonymous communications techniques such as Tor \cite{TOR} and mix networks \cite{jakobsson2001optimally},
prevent sender or receiver information leakage from network activities.

\looseness=-1
The proposed PPET protocol provides another type of privacy protection, which primarily considers the privacy of
data used in multi-party computation. Similar techniques include oblivious transfer \cite{OT}, zero-knowledge proofs \cite{goldreich1994definitions}, secret handshakes \cite{BDSSS03} and private set intersection \cite{PSI,PSZ18,PSI99,PSI_tsudik}.

\looseness=-1
\section{Conclusions}
\label{sec:conc}
The need for digital emblems arises in the context of cyber-conflicts and international
humanitarian law. Developing effective solutions triggers novel requirements for labeling devices
in a way that provides binding integrity, while ensuring obliviousness for the verifier and
security preservation for the provers. In this paper, we defined this problem, fleshed out
its requirements, and provided the first solution, along with a prototype implementation.
Future work includes experiments with the proposed scheme to further assess its practicality and
exploring open problems, such as  ascertaining the absence of oblivious digital tokens.

\section*{Acknowledgments}
Mihael Liskij's and David Basin's research was funded by the Werner Siemens-Stiftung (WSS) as part of the Centre for Cyber Trust (CECYT). We thank the WSS for their contribution.

Gene Tsudik's work was supported in part by NSF Award SATC1956393 and NSA Award H98230-22-1-0308.

This research was also partially supported by the Lee Kong Chian Fellowship awarded to Xuhua Ding by Singapore Management University.

\section*{Ethics Considerations}
This research is directly motivated by the ICRC's initiative to introduce digital emblems in cyberspace.
Such emblems should provide a way for protected parties to signal their right to protection in cyberspace,
analogous to the use of a red cross, crescent, or crystal to mark, e.g., medical relief workers and medical
facilities/objects in the physical world. This, in turn, provides a basis for extending international
humanitarian law to cyberspace and making the world safer, in wartime.
Our research focused on the technical foundations for digital emblems: their requirements, design,
implementation, evaluation, and security proofs.
There are no ethical concerns, since this work neither involved human subjects nor attacked
systems or infrastructure of any kind. Hence, there was no need for approval by the authors' institutions'
ethical review boards.

\section*{Open Science}
The Tamarin model, Tamarin proofs, the ODT prototype, and all raw measurement results are available for
download at Zenodo \cite{ODT-zenodo}.

\printbibliography

\begin{appendices}

\section*{Appendix A: Theorem~\ref{thm:privacy-preservation} Proof}
\label{app:ppet-proof}
\begin{proof}
  We use a hybrid proof to show that the PPET protocol $\mathit{PPET}$ from Figure~\ref{fig:wv}
  satisfies privacy preservation. We first define starting and end distributions:

  \begin{equation}
    \begin{split}
      \DD_b = \{& \GG \riar u; \ZZ_p \riar t; \WW \riar w; \adv(u) \riar v \in \GG; \\
                & z_0 = v^t u^{-wt}; \GG \riar z_1 : (g^t, z_b) \}.
    \end{split}
  \end{equation}

  Note that $\DD_0$ exactly matches the inputs given to the
  adversary in the privacy preservation game when $b = 0$ and
  $\DD_1$ matches the inputs when $b = 1$. From this, we have
  $\advantage{\mathsf{PP}}{\mathit{PPET}} =
  \advantage{\Delta}{\DD_0, \DD_1}$. To prove that the
  protocol is secure, we show that $\DD_0$ and $\DD_1$ are
  computationally indistinguishable.

  \mypara{First hybrid} We define $\HH_1$ and show that it is
  statistically indistinguishable from $\DD_0$:
  \begin{equation}
    \begin{split}
      \HH_1 = \{& \GG \riar u; \ZZ_p \riar t; \WW \riar w; \adv(u) \riar u^{x}; \\
                & z = u^{xt - wt} : g^t, z \}.
    \end{split}
  \end{equation}

  By the protocol, $v$ must be an element from $\GG$. Therefore,
  there exists an $x$ such that $u^{x}$ is equal to $v$. Hence this is just a syntactic change and we have
  $\advantage{\Delta}{\DD_0, \HH_1} = 0$.

  \mypara{Second hybrid} We define $\HH_2$ and show that it is
  statistically indistinguishable from $\HH_1$:
  \begin{equation}
    \begin{split}
      \HH_2 = \{& \GG \riar u; \ZZ_p \riar t; \WW \setminus x \riar w; \adv(u) \riar u^{x}; \\
                & z = u^{xt - wt} : g^t, z \}.
    \end{split}
  \end{equation}

  The only change is that we are sampling $w$ from a smaller set. The
  adversary can distinguish $\HH_1$ and $\HH_2$ with an advantage of
  at most the probability of outputting such an $x$. In the real
  world, this advantage represents the adversary's a priori knowledge
  of the distribution of the witness values. However, as stated in
  Definition~\ref{def:witn-indist}, we perform this proof under the
  assumption that the adversary guesses the witness
  incorrectly. Therefore, we consider the advantage of the adversary
  in this step to be zero, i.e.
  $\advantage{\Delta}{\HH_1, \HH_2} = 0$, when the adversary guesses the witness incorrectly.

  \mypara{Third hybrid} We define $\HH_3$ and show that it is
  statistically indistinguishable from $\HH_2$:
  \begin{equation}
    \begin{split}
      \HH_3 = \{& \GG \riar u; \ZZ_p \riar t; \adv(u) \riar u^{x}; \\
                & \WW \setminus 0 \riar k; z = u^{kt} : g^t, z \}.
    \end{split}
  \end{equation}

  This is again just a syntactic change where we rewrite
  $k = x - w \in \WW \setminus 0$, therefore
  $\advantage{\Delta}{\HH_2, \HH_3} = 0$. Note that the distribution
  of $k$ depends on the adversary's distribution of witnesses embedded
  in $x$. Because the adversary might have some a priori knowledge on
  the distribution of witnesses, we cannot assume $k$ is sampled uniformly.

  \mypara{Fourth hybrid} We define $\HH_4$ and show that it is
  computationally indistinguishable from $\HH_3$:
  \begin{equation}
    \begin{split}
      \HH_4 = \{& \GG \riar u; \ZZ_p \riar t; \adv(u) \riar u^{x}; \\
                & \WW \setminus 0 \riar k; \ZZ_p \riar c; z = g^{kc} : g^t, z \}.
    \end{split}
  \end{equation}

  We uniformly sample $c$ and use it instead of $t$ to compute $z$. We
  do a reduction to the Decisional Diffie-Hellman (DDH) problem and
  show that $\advantage{\Delta}{\HH_3, \HH_4}$ is negligible. We
  assume that there exists an algorithm $\adv$ that can efficiently
  distinguish $\HH_3$ and $\HH_4$. We build an algorithm $\RR$ that
  uses $\adv$ to break $\mathsf{DDH}$.

  $\RR$ receives from its DDH challenger
  $\alpha = g^a, \beta = g^b, \gamma = g^c$.
  It simulates $\adv$'s game and uses $\alpha$ instead
  of $u$, $\beta$ instead of $g^t$, and $\gamma$ instead of
  $u^{t}$.
  When $b = 0$ in $\RR$'s game, $\RR$ outputs $(\beta, g^{kab})$ that exactly matches the distribution $\HH_3$.
  When $b = 1$ in $\RR$'s game, $\RR$ outputs $(\beta, g^{kc})$ that exactly matches the distribution $\HH_4$.
  If $\adv$ outputs $b' = 0$ as its own guess $\RR$ outputs 0, and if $\adv$ outputs $b' = 1$ it outputs 1.
  Because $\RR$ perfectly simulates $\adv$'s distinguishing game, we
  conclude that
  $\alg{adv}^{\mathsf{DDH}}_{\RR} = \advantage{\Delta}{\HH_3, \HH_4}$.
  However, this is a contradiction because DDH is difficult in the
  group $\GG$. Hence $\advantage{\Delta}{\HH_3, \HH_4}$ is
  negligible.

  \mypara{Last step} We show
  $\advantage{\Delta}{\HH_4, \DD_1} = 0$. For convenience,
  $\DD_1$ is:
  \begin{equation}
    \begin{split}
      \DD_1 = \{ \GG \riar u; \ZZ_p \riar t; \adv(u) \riar v;
                 \GG \riar z_1 : (g^t, z_1) \}.
    \end{split}
  \end{equation}

  While the distribution of $k$ is not uniform, combining it with a
  uniformly random variable $c$ ensures that $g^{kc}$ is uniform. To
  prove that we use the fact that the order of $\GG$ is prime. For any
  value of $k$, $g^k$ is a generator of the group. Because $c$ takes
  all possible values from $\ZZ_p$, the probability that $g^{kc}$ is
  equal to any group element is $1 / p$. Since this holds for
  every $k$ (no matter the distribution of $k$), we can conclude that
  the probability that $g^{kc}$ is equal to a specific group element
  is also $1 / p$. Therefore, the distributions $\HH_4$ and $\DD_1$
  are statistically indistinguishable.

  \mypara{Conclusion} We have shown for all distributions in our chain
  that they are either computationally or statistically
  indistinguishable from their neighbors. Using the hybrid argument,
  we conclude that $\DD_0$ and $\DD_1$ are computationally
  indistinguishable and therefore the advantage
  $\mathsf{adv}^{\mathsf{PP}}_{\adv}$ of the adversary $\adv$ in
  winning the $\mathsf{PP}$ game is negligible.
\end{proof}

\section*{Appendix B: Theorem~\ref{thm:sec-pre} Proof}
\label{app:thm3proof}
We argue this,
considering two scenarios:
(i) the adversary has partial knowledge of the memory and (ii) the adversary has full knowledge of the entire memory except for a 256 bit secret.
{The first scenario models a general brute force attack where $\adv$ attempts to extract information from a system that has a relatively low entropy.
We show that $\adv$ has a negligible advantage given that the number of measured locations $C$ is our security parameter.
The second scenario models a targeted attack where the adversary is interested in extracting a cryptographic key from a process's memory.
We use these cases to show that our scheme remains practically secure even against extremely powerful adversaries.
}

(i) Assume, for simplicity, that O-TEE measures 64 bit locations and that the adversary knows $k$ out of 64 bits for every measurable location but not all of them.
For the first measurement, the adversary has a probability of $(|X|-2^k)^{-C}$ to guess correctly.
After $l$ guesses over the same subset of measured values, the probability of the next guess being correct increases to $\big((|X|-2^k)^{C} - l\big)^{-1}$.
On average, each subset is measured $q\binom{|I|}{C}^{-1}$ times.
To simplify the calculations, we assume the adversary correctly guesses with the probability of the last guess $\Big(\big(|X|-2^k\big)^{C} - q\binom{|I|}{C}^{-1}\Big)^{-1}$, which provides an upper bound on the adversary's chance of success.
The probability $P$ that the adversary wins in the game by correctly guessing at least once using $q$ queries is then
\begin{equation*}
  \begin{split}
    P = 1 - \Bigg(1 - \bigg(\Big(|X|-2^k\Big)^{C} - q\binom{|I|}{C}^{-1}\bigg)^{-1}\Bigg)^q \,.
  \end{split}
\end{equation*}
Note that we interpret the adversary's output as an additional query and absorb it into $q$. This shifts the domain of $q$ by one and ensures that $q$ is greater than zero.

A function is defined to be negligible if $\lim_{n \rightarrow \infty}f(n)n^s = 0$ for all $s > 0$.
We set $f(n) := P(C)$ and show that $\lim_{C \rightarrow \infty}P(C)C^s$ goes to zero when $C$ is our security parameter.
First, we simplify $P(C)$ to $\bar{P}(C) = 1 - (1 - (2^C-q)^{-1})^q$ by setting $\binom{|I|}{C}^{-1}$ to one and $|X| - 2^k$ to two.
Both of these changes do not diminish the adversary's winning probability, hence $P(C) \leq \bar{P}(C)$.
Let $e = (2^C-q)^{-1}$, we calculate:
\begin{equation*}
  \begin{split}
    \bar{P}(C) = (1 - (1 - e)^q) = e \sum_{i=0}^{q-1}(1 - e)^i < e q = q (2^C-q)^{-1}.
  \end{split}
\end{equation*}
It is easy to see that $\lim_{C \rightarrow \infty} q (2^C-q)^{-1} C^s = 0$.
Since $P(C)$ is bounded from above by $q (2^C-q)^{-1}$, we have $\lim_{C \rightarrow \infty}P(C)C^s = 0$ and therefore the probability the adversary wins in the security preservation game is negligible.

For the second scenario (ii), we assume that the 256 bit secret is aligned to the measured positions, i.e. four positions cover it fully, that O-TEE measures only one 64 bit position, and that the adversary knows all memory contents except the secret.
We slightly modify the security game by fixing $I_t$ to be the set of four locations containing the key.
The adversary wins if it guesses correctly at least one of the four locations.

\looseness=-1
The probability of correctly guessing the value of the key after $l$ guesses on the same position is $(|X| - l)^{-1}$.
Each position will be measured $q / |I|$ times on average.
Similarly to the previous scenario, we assume that the adversary is always guessing with the probability of the last guess giving us the formula $(|X| - q / |I|)^{-1}$.
The probability of the adversary winning the game by guessing correctly at least one of the four key positions using $q$ queries is then
\begin{equation*}
    \begin{split}
        1 - \Big(1 - \big(|X| - q / |I|\big)^{-1}\Big)^{4q/|I|} \,.
    \end{split}
\end{equation*}

For a one megabyte program, the adversary has a probability of $2^{-59}$ of winning in the game {assuming it can make} $10^6$ requests.
While this probability is relatively high for cryptographic standards, we believe this is not an issue in practice.
With the results from our first scenario, we can increase the security of our scheme by measuring more locations.
Moreover, the adversary does not have full freedom of establishing requests and must rely on the victim process to initiate a connection.
Finally, an adversary that has perfect knowledge of every memory location at every moment a TLS connection opens is highly unlikely.
\end{appendices}

\end{document}